\title{Knapsack Secretary Through Boosting\thanks{Supported in part by the Independent Research Fund Denmark, Natural Sciences, grant DFF-0135-00018B.}} 
\author{Andreas Abels\thanks{School of Business and Economics, RWTH Aachen University, Germany. Email: \texttt{andreas.abels@oms.rwth-aachen.de}.} \and
Leon Ladewig\thanks{Munich, Germany. Email: \texttt{leonladewig@mail.de}.} \and
Kevin Schewior\thanks{Department of Mathematics and Computer Science, University of Southern Denmark, Odense, Denmark. Email: \texttt{kevs@sdu.dk}.} \and
Moritz Stinzendörfer\thanks{Department of Mathematics, TU Kaiserslautern, Germany. Email: \texttt{stinzendoerfer@mathematik.uni-kl.de}.}}
\newcommand{\OPT}{v(\mathsf{OPT})}
\newcommand{\ALGa}{\mathsf{ALG_\alpha}}
\newcommand{\ALG}{\mathsf{ALG}}
\newcommand{\alg}{\ALG}
\newcommand{\opt}{\mathsf{OPT}}
\newcommand{\E}{\mathbb{E}}
\renewcommand{\epsilon}{\varepsilon}
\newtheorem{theorem}{Theorem}
\newtheorem{lemma}{Lemma}
\newtheorem{definition}{Definition}
\newtheorem{corollary}{Corollary}
\newtheorem{observation}{Observation}
\newtheorem{proposition}{Proposition}
\begin{document}

\maketitle

\begin{abstract}
	We revisit the knapsack-secretary problem (Babaioff et al.; APPROX 2007), a generalization of the classic secretary problem in which items have different sizes and multiple items may be selected if their total size does not exceed the capacity $B$ of a knapsack. Previous works show competitive ratios of $1/(10e)$ (Babaioff et al.), $1/8.06$ (Kesselheim et al.; STOC 2014), and $1/6.65$ (Albers, Khan, and Ladewig; APPROX~2019) for the general problem but no definitive answers for the achievable competitive ratio; the best known impossibility remains~$1/e$ as inherited from the classic secretary problem. In an effort to make more qualitative progress, we take an orthogonal approach and give definitive answers for special cases.
	
	Our main result is on the $1$-$2$-knapsack secretary problem, the special case in which $B=2$ and all items have sizes $1$ or $2$, arguably the simplest meaningful generalization of the secretary problem towards the knapsack secretary problem. Our algorithm is simple: It \emph{boosts} the value of size-$1$ items by a factor $\alpha>1$ and then uses the size-oblivious approach by Albers, Khan, and Ladewig. We show by a nontrivial analysis that this algorithm achieves a competitive ratio of $1/e$ if and only if~\mbox{$1.40\lesssim\alpha\leq e/(e-1)\approx 1.58$}.
	
	Towards understanding the general case, we then consider the case when sizes are $1$ and $B$, and $B$ is large. While it remains unclear if $1/e$ can be achieved in that case, we show that algorithms based only on the relative ranks of the item values can achieve precisely a competitive ratio of~$1/(e+1)$. To show the impossibility, we use a non-trivial generalization of the factor-revealing linear program for the secretary problem (Buchbinder, Jain, and Singh; IPCO 2010).
\end{abstract}

\section{Introduction}
\label{sec:intro}

In the classic secretary problem, there is a single position to be filled, and $n$ candidates arrive one by one in uniformly random order. Upon arrival of any candidate, they have to be rejected or accepted immediately and irrevocably only based on \emph{ordinal} information on the candidates seen so far, that is, their relative ranks. The goal is to maximize the probability that the best candidate is selected. The origin of this problem is unclear; for a discussion, we refer to Ferguson's survey~\cite{Ferg89a}. It is well known~\cite{Lindley1961,Dynkin1963} since the 1960s that a probability of $1/e$ can be achieved by selecting the first candidate that is better than the~$n/e$ first candidates and that this is the best-possible probability under the typical assumption $n\rightarrow\infty$. Many extensions of this problem have since been considered, especially in recent years, partially due to relations to beyond-the-worst-case analyses of online algorithms (e.g.,~\cite{Kenyon96,Albers0L20,gupta_singla_2021}) and to mechanism design (e.g.,~\mbox{\cite{Kleinberg05,BabaioffIKK07}}).

There is extensive work on multiple-choice variants of the secretary problem. Few of these works consider an ordinal setting~\cite{BuchbinderJS14,HoeferK17,SotoTV18}; the majority considers the \emph{value} setting in which each arriving candidate (or item) $i$ is revealed along with a value $v_i\in\mathbb{R}_{\geq 0}$ and must be rejected or accepted immediately and irrevocably so that the set of accepted items obeys some combinatorial constraint. The goal is to obtain an algorithm with a (strong) competitive ratio $\rho$, i.e., that constructs a solution $\mathsf{ALG}$ such that $v(\mathsf{ALG})$, the sum of values of accepted items, is in expectation at least $\rho\cdot v(\mathsf{OPT})$ where $\mathsf{OPT}$ is the best solution that could have been constructed.

Whereas the results for the standard secretary problem carry over to the value setting, even relatively simple variants are not completely understood in that setting. This is arguably due to the sheer amount of conceivable strategies. For instance, the precise competitive ratio achievable in the $2$-secretary problem, the variant in which two positions are to be filled, is \emph{not} known---only that it is strictly larger than in the much-better-understood ordinal ``counterpart'', sometimes called the $(2,2)$-secretary problem~\cite{BuchbinderJS14,ChanCJ15}.

The secretary variant that has probably received most attention is the matroid secretary problem~\cite{BabaioffIKK18}, an extension of the $k$-secretary problem~\cite{Kleinberg05} (in which~$k$ positions are to be filled) to any matroid constraint, see, e.g., the state-of-the-art result~\cite{Lachish14,FeldmanSZ18} and the survey by Dinitz~\cite{Dinitz13}. An orthogonal and also well-known extension of $k$-secretary is the \emph{knapsack} secretary problem in which items additionally have sizes and the total size of accepted items must not exceed some given capacity $B$~\cite{BabaioffIKK07,KesselheimRTV18,AlbersKL21,NaoriR19,KesselheimM20}. While this line of work has improved the competitive ratio from $1/(10e)$ to $1/6.65$, no impossibility beyond $1/e$ has been found. For some secretary versions, e.g., the bipartite-matching variant~\cite{KesselheimRTV13}, it is known that this ratio can in fact be matched.

Our paper may raise hope that the ratio of $1/e$ can in fact be matched for knapsack secretary. First, we consider the $1$-$2$-knapsack problem. Here, items have sizes either $1$ or $2$ and the capacity $B$ is $2$. We develop a $1/e$-competitive algorithm. To us, this result is both surprising and significant because the problem generalizes both the classic secretary problem, which severely restricts the set of candidate algorithms, and the not-entirely-understood $2$-secretary problem. 
We also consider the problem with sizes either $1$ or $B$ and $B$ large, for which we show initial results, namely that $1/(e+1)\pm o(1)$ is precisely the competitive ratio that can be achieved by \emph{ordinal} algorithms. These are algorithms that only use the relative rank of the items and disregard the actual values.

\subsection{Related Work}

Kleinberg~\cite{Kleinberg05} first considers $k$-secretary as introduced above, gives an algorithm with competitive ratio $1-\Theta(1/\sqrt{k})$, and shows that this ratio is asymptotically best possible. This result is reproduced by Kesselheim et al.~\cite{KesselheimRTV18} in the more general context of packing LPs. Buchbinder et al.~\cite{BuchbinderJS14} consider the~$(j,k)$-secretary problem in the ordinal setting in which $j$ items can be selected and the goal is to maximize the expected ratio of elements selected from the top $k$ items. They also state the algorithm-design problems as linear programs, which they can only solve for small values of $j$ and $k$, but Chan et al.~\cite{ChanCJ15} can solve them for larger values. Any guarantee for the $(k,k)$-secretary problem carries over to the $k$-secretary problem, but Chan et al.~\cite{ChanCJ15} rule out the other direction. More specifically, Chan et al.'s results include an optimal algorithm for $(2,2)$-secretary with guarantee approximately $0.489$ and a (not necessarily optimal) algorithm for $2$-secretary with guarantee approximately $0.492$. Albers and Ladewig~\cite{AlbersL21} revisit the problem and give simple algorithms with improved (albeit non-optimal) competitive ratios for many fixed values of~$k$.

The knapsack secretary problem is introduced by Babaioff et al.~\cite{BabaioffIKK07} who give a~$1/(10e)$-competitive algorithm, which was subsequently improved by Kesselheim et al.~\cite{KesselheimRTV18} to $1/8.06$ and by Albers, Khan, and Ladewig~\cite{AlbersKL21} to $1/6.65$. Essentially all known $\Omega(1)$-competitive algorithms for the knapsack secretary problem are somewhat wasteful in the competitive ratio, presumably at least partially for the sake of a simpler analysis, in that they randomize between different algorithms that are tailored to respective item sizes. It seems that qualitative progress can only be made by a more fine-grained analysis avoiding such case distinctions.

A variant of the knapsack secretary problem that has recently been considered is the fractional variant in which an item can also be packed fractionally, avoiding situations in which an arriving item cannot be selected at all, even when there is space. The currently best known achievable competitive ratio is $1/4.39$~\cite{GilibertiK21}, also achieved by a blended approach.

It is not difficult to see that no constant competitive ratio can be achieved when the items do not arrive in random but in adversarial order, even in the unit-value case~\cite{Marchetti-SpaccamelaV95}. Starting from this problem, problems in which other assumptions than the order are relaxed are considered as well. For instance, Zhou et al.~\cite{ZhouCL08} consider the version in which each item has a small size; Böckenhauer et al.~\cite{BockenhauerKKR14} and Boyar et al.~\cite{BoyarFL22} introduce advice and untrusted predictions, respectively, to the problem.

Lower bounds for secretary problems in the value setting are rare. For some related problems~\cite{CorreaDFS19,CorreaDFSZ21,abs-2011-01559}, the rich class of strategies can be handled by, for any strategy, identifying an infinite set of values (using Ramsey theory) on which it is much better behaved. It is, however, not clear how such an approach could be applied, e.g., for knapsack secretary since it seems one would need to control how the values in the support are spread out, a property that is irrelevant in the other settings.

\subsection{Our Contribution}

The special case $1$-$2$-knapsack is not only arguably the simplest special case that exhibits features of the knapsack problem distinguishing it from the matroid secretary problem. Since the problem generalizes both the standard secretary problem and $2$-secretary, we believe that settling it in terms of the achievable competitive ratio is also interesting per se. 

A good starting point for tackling $1$-$2$-knapsack seems to be the extended secretary algorithm, which is $1/3.08$-competitive in the slightly more general case when all items have size larger than $B/3$~\cite{AlbersKL21}. This algorithm simply ignores the item sizes, samples some prefix of length $cn$ for some optimized constant~$c\in(0,1)$, and afterwards selects all items that surpass the largest value from the sampling phase and that can still be feasibly packed. It is, however, easy to see that this approach cannot achieve $1/e$: Achieving $1/e$ in an instance where the optimal solution consists of a large item requires setting $c=1/e\pm o(1)$. The resulting algorithm will, however, not be $1/e$-competitive in an instance where the optimal solution consists of two small items of equal value, but there are many large items, each slightly more valuable than the individual small items, making sure that the small items are (almost) never selected by the algorithm. In this case, the competitive ratio of the algorithm will be essentially half the probability that the algorithm selects a (large) item, that is, $(1-1/e)/2<1/e$. We denote two instances of the above forms by $\mathcal{I}_1$ and $\mathcal{I}_2$, respectively, in the following. Clearly, it is possible to choose $c$ so as to balance between $\mathcal{I}_1$ and $\mathcal{I}_2$. As a small side result, we show that a ratio of approximately $0.353<1/e$ can be achieved that way.

The key observation leading to our $1/e$-competitive algorithm is that keeping~$c=1/e$ and internally multiplying (\emph{boosting}) values of small items with a suitable constant factor $\alpha>1$ prior to running the extended secretary algorithm may handle both $\mathcal{I}_1$ and $\mathcal{I}_2$: While this is clear for $\mathcal{I}_1$ when the ranking of values does not change through boosting, a small item may overtake the most valuable (large) item. This however means that this small item has relatively large (actual) value. Using that the algorithm also accepts the second-best item with a significant probability ($1/e^2$), we can show that, with the right choice of $\alpha$, we still extract enough value from the small and large items to cover $1/e\cdot \OPT$. In $\mathcal{I}_2$, the small items would overtake the large items, significantly improving the expected value achieved by the algorithm; conversely, if they did not overtake, they would not have been harmfully valuable in the first place---again with the right choice of $\alpha$. To sum up, ``$\mathcal{I}_1$ type'' instances impose an upper bound on $\alpha$, and ``$\mathcal{I}_2$ type'' instances impose a lower bound on $\alpha$. We show that the algorithm is~$1/e$-competitive if \emph{and only if} $1.40\lesssim\alpha\leq e/(e-1)\approx 1.58$ where the upper bound comes essentially from the above consideration for $\mathcal{I}_1$. Note that therefore, in particular, our boosting \emph{is} different from ordering the items by their ``bang for the buck'' ratios.

We note that, while $\alpha$-boosting seems reminiscent of $\beta$-filtering~\cite{ChanCJ15} (for $\beta<1$), applying $\beta$-filtering to the extended secretary algorithm will not yield a $1/e$-competitive algorithm. The extended secretary algorithm would be adapted by ignoring items with a value less than $\beta$ times the highest value seen so far. Note that indeed, a ``$\mathcal{I}_1$ type'' instance where all but the most valuable item have a similar small value, one would have to choose $c=1/e\pm o(1)$ again, independently of $\beta$. But such an algorithm would again only be $(1-1/e)/2$-competitive on $\mathcal{I}_2$.

The crux of our analysis is distinguishing all possible cases beyond those covered by $\mathcal{I}_1$ and $\mathcal{I}_2$ in a smart way. To bound the algorithm's value in each of these cases, we precisely characterize the probabilities with which the algorithm selects an item depending on its size and its position in the (boosted) order of values, significantly extending observations made by Albers and Ladewig~\cite{AlbersL21}.

Before tackling the general case and understanding potentially complicated knapsack configurations, we propose considering a clean special case called \mbox{$1$-$B$}-knapsack where items have sizes either $1$ or $B$, and $B$ is large. One may be tempted to think that this special case is difficult in that selecting a small item early on may lead to a blocked knapsack and a horribly inefficient use of capacity, e.g., because all other items are large. On the other hand, when $B$ is large, one can easily avoid such situations by sampling. We do not give a conclusive answer on whether $1/e$ can be matched in this case, but we give some preliminary results.

Unfortunately, a competitive ratio of $1/e$ for $1$-$B$-knapsack cannot be achieved with our boosting approach. The same consideration we made for $\mathcal{I}_1$ earlier (for~\mbox{$1$-$2$}-knapsack) to get an upper bound of $e/(e-1)$ on $\alpha$ still works; in contrast, a generalization of $\mathcal{I}_2$ rules out any constant boosting factor. 

We then give another algorithm for $1$-$B$ knapsack which can be viewed as a linear interpolation between the classic secretary algorithm and the algorithm by Kleinberg~\cite{Kleinberg05} for $k$-secretary. We show that it is~$1/(e+1)$-competitive. This algorithm turns out to be \emph{ordinal}, that is, its decisions only depend on the item sizes and the relative order of their values. Remarkably, we are able to show that~$1/(e+1)$ is the best-possible guarantee such algorithms can achieve. We do so by generalizing the factor-revealing linear program due to Buchbinder et al.~\cite{BuchbinderJS14} by adding variables and constraints. Arguing that the LP indeed models our problem becomes more difficult because, in contrast to the setting of Buchbinder et al., at any time, even the size of the next item is random. We do so by showing reductions between our model and an auxiliary batched-arrival model.

\section{Preliminaries}
\label{sec:prelims}

We use the following notation.
Let $\mathcal{I} = \{1,\ldots,n\}$ be the set of items (also called \textit{elements}), where each item $i \in \mathcal{I}$ is specified by a profit $v_i$ and a size $s_i$. Moreover, we are given a knapsack of capacity $ B \in \mathbb{N}_{\geq 2}$.
The goal is to find a maximum-profit packing, i.e., a subset of items $S$ such that~$\sum_{i \in S} s_i \leq B$ and~$\sum_{i \in S} v_i$ is maximized.
Without loss of generality, we assume that all elements have distinct values and that $v_1 > v_2 > ... > v_n$. This way, the name of an item $i$ corresponds to the (global) \textit{rank} in $\mathcal{I}$.

Throughout the following sections, an important subclass of the knapsack problem arises where each item has either size 1 or $B$.
\begin{definition}[1-$B$-knapsack]
	We call the special case of the knapsack problem where all items have size $1$ or $B$ the \textit{1-$B$-knapsack problem}. Items of size $1$ are called \textit{small} and items of size $B$ are called \textit{large}.
\end{definition}
Within the context of 1-$B$-knapsack, we use the following further notation.
Let~$\mathcal{I}_S$ be the set of small items. For any small item $i \in \mathcal{I}_S$, let $\mathrm{r}_\mathrm{s} (i)$ denote its rank among the small items.
Note that $\mathrm{r}_\mathrm{s} (i)$ is at most the global rank $i$ of this item.
Further, let $ \mathrm{r}'_\mathrm{g}(a) $ denote the global rank of the small item $x$ that satisfies~$\mathrm{r}_\mathrm{s}(x) = a$.
When we use just the word ``rank'', we refer to the global rank.

Let $\opt$ be an optimal offline algorithm. For any algorithm $\ALG$, we overload the notation and use the same symbol also for the packing returned by the algorithm. Further, we denote by $v(\ALG) := \sum_{i \in \ALG} v_i$ the total profit of the packing returned by $\ALG$. We are particularly interested in \emph{online} algorithms, i.e., algorithms that are initially only given $n$ and are presented with the items one by one. Upon arrival of any item, an online algorithm has to irrevocably decide whether it includes the item or not. A special class of algorithms we consider are \emph{ordinal} algorithms. These algorithms only have access to the item sizes and the \emph{relative order} of item values.

We say that an online algorithm $\ALG$ is $\rho$-competitive if $\mathbb{E}[v(\ALG)]\geq\rho\cdot \OPT$ for all instances, where the expectation is taken over a uniformly random arrival order (and possibly internal randomization that the algorithm uses).
In general, we assume $n\rightarrow\infty$ for our bounds. Note that, for a fixed number of items, we can achieve a guarantee that is arbitrarily close to the guarantee for $n\rightarrow\infty$ by adding a sufficient amount of virtual dummy items.

Finally, throughout the paper, we use the notation $[k]:=\{1,\dots,k\}$ for any~$k\in\mathbb{N}$.

\section{Matching $1/e$ for 1-2-Knapsack}
\label{sec:1-2}

In this section, we develop an optimal algorithm for 1-2-knapsack.
For this purpose, we first propose a natural algorithm for 1-$B$-knapsack, based on the size-oblivious approach from~\cite{AlbersKL21}. Here, items are accepted whenever their profit exceeds a certain threshold, similar to the optimal algorithm for the classic secretary problem. Therefore, we call it the \textit{extended secretary algorithm}. 
From an initial sampling phase of length $cn$, where $c \in (0,1)$ is a parameter of the algorithm, the best item is used as a reference element. Subsequently, any item beating the reference element is packed if it still fits.
A formal description is given in Algorithm~\ref{alg:extendedSec}.

\begin{algorithm}[t]
	\textbf{Input:} Instance of 1-$B$-knapsack arriving in uniformly random order, parameter $ c \in (0,1) $.\\
	\textbf{Output:} A knapsack packing.\\
	\For{round $ \ell =1 $ to $ n $ }{
		\If{$ \ell \leq cn$}{ 
			Reject the current item\tcp*{sampling phase}
		}  
		\If{$\ell > cn$}{
			Let $ v^* $ be the highest profit seen up to round $ \lfloor cn \rfloor$\;
			Pack the current item if its profit exceeds $ v^* $ and the remaining capacity is large enough\;
		}
	}
	\caption{Extended secretary algorithm}
	\label{alg:extendedSec}
\end{algorithm}

In the following, we denote Algorithm~\ref{alg:extendedSec} by $\ALG$ and set
\begin{equation*}
\begin{array}{ccl}
p_i(j) &:=& \text{Pr}[\mathsf{ALG} \text{ packs item $i$ as the $j$-th element}],\\
p_i &:=& p_i(1),\\
P_i &:=& \sum\limits_{j=1}^B p_i(j).
\end{array}
\end{equation*}
Thus, $P_i$ is the probability that the algorithm packs item $i$ at all, while $p_i$ is the probability that it is packed as the first item. 
We first state some results on the values $p_i$, which have essentially been investigated in~\cite{AlbersL21}. Indeed, the following results follow from that work and some simple observations.

\begin{lemma}\label{wkeitenc}
	For $ i \in \mathbb{N} $, it holds that
	\begin{equation*}
	p_i = c \left( \ln \dfrac{1}{c} + \displaystyle\sum\limits_{ \ell = 1}^{i-1} (-1)^{\ell +1} \binom{i-1}{\ell} \dfrac{c^{\ell} -1}{\ell} \right)\pm o(1).
	\end{equation*}
\end{lemma}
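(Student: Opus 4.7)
The plan is to re-express the event ``item $i$ is packed as the first element'' as a purely positional condition on the uniformly random arrival order, compute its probability in closed form, pass to the $n\to\infty$ limit, and then expand the resulting integral via the binomial theorem. First, I would show that $i$ is the first item packed exactly when $i$ arrives at some position $\ell>cn$, $i$ has the largest value among the first $\ell$ arrivals, and the largest value among the first $\ell-1$ arrivals occurs at a position in the sampling window $\{1,\dots,\lfloor cn\rfloor\}$. This follows because the threshold $v^*$ equals the maximum value in the sampling window: for $i$ to clear this threshold, no item with value greater than $v_i$ may sit in the sampling window; and for $i$ not to be preceded by any earlier packing, every decision-phase item arriving before $i$ must have value at most $v^*$. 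Together, these force $i$ to be the overall maximum among the first $\ell$ arrivals and the runner-up among them (the maximum among the first $\ell-1$ arrivals) to fall into the sampling window.

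Conditional on $i$ arriving at position $\ell$ (probability $1/n$), the event that none of items $1,\dots,i-1$ lie in the first $\ell-1$ positions has probability $\binom{n-i}{\ell-1}/\binom{n-1}{\ell-1}$, and, conditional on that, the first $\ell-1$ positions contain a uniformly random subset of $\ell-1$ items drawn from $\{i+1,\dots,n\}$ in uniformly random order. Hence their maximum sits at a uniformly random position and lies in the sampling window with probability $\lfloor cn\rfloor/(\ell-1)$. Multiplying these factors and summing over $\ell>cn$ gives an exact finite expression; using the substitution $t=\ell/n$ together with the uniform convergence $\binom{n-i}{\ell-1}/\binom{n-1}{\ell-1}\to(1-t)^{i-1}$ on $[c,1]$, a standard Riemann-sum argument yields
\[
p_i = c\int_c^{1}\frac{(1-t)^{i-1}}{t}\,dt \pm o(1).
\]

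The remaining step is purely calculational: expanding $(1-t)^{i-1}=\sum_{\ell=0}^{i-1}\binom{i-1}{\ell}(-1)^\ell t^\ell$ and integrating termwise produces the $\ln(1/c)$ contribution from $\ell=0$ and a contribution of $(-1)^{\ell+1}\binom{i-1}{\ell}(c^\ell-1)/\ell$ from each $\ell\geq 1$, matching the stated formula. I expect the only nontrivial point to be controlling the $o(1)$ error in the Riemann-sum passage, but since $(1-t)^{i-1}/t$ is bounded on $[c,1]$ for fixed $c>0$ and fixed $i$, this reduces to a routine estimate; the lemma's remark that the result essentially follows from~\cite{AlbersL21} is consistent with this outline, since the same probabilities arise there in the analysis of the extended secretary algorithm applied to $k$-secretary.
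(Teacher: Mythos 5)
Your proof is correct, but it takes a genuinely different route from the paper's. The paper's proof is a short reduction: it observes that the extended secretary algorithm packs $i$ first iff the \textsc{single-ref} algorithm of Albers and Ladewig (with $r=1$, $k=i$) does, and then invokes their Lemmas~4--6 to read off the stated formula (separating the cases $i=1$, a dominating item, and $i\geq 2$, a non-dominating item). You instead give a self-contained first-principles derivation: you characterize the event ``$i$ is packed first'' positionally (namely, $i$ arrives at position $\ell>cn$, $i$ is the global maximum among the first $\ell$ arrivals, and the maximum among the first $\ell-1$ arrivals sits in the sampling window), compute the conditional probabilities exactly via $\binom{n-i}{\ell-1}/\binom{n-1}{\ell-1}$ and the uniform position of the leading maximum, pass to the Riemann integral $c\int_c^1 (1-t)^{i-1}/t\,dt$, and expand by the binomial theorem. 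The characterization is exactly right (the knapsack is necessarily empty at $i$'s arrival, so feasibility is automatic), the conditional factorization is sound, the uniform convergence you invoke holds for fixed $i$ and $c>0$, and the integral expansion reproduces the claimed formula (sanity checks: $p_1=c\ln(1/c)$ and $p_2=c(\ln(1/c)+c-1)$, giving $1/e$ and $1/e^2$ at $c=1/e$, as the paper uses). What each approach buys: the paper gets brevity and consistency with the prior work it builds on; your derivation makes the lemma self-contained and also makes transparent \emph{why} the formula has the stated integral/binomial form, which the paper's citation-based argument obscures.
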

\begin{proof}
	Let $i \in \mathbb{N}$. The extended secretary algorithm packs $i$ as the first item if and only if the \textsc{single-ref} algorithm from \cite{AlbersL21} with $r=1$ and $k=i$ packs $i$ as the first item. 
	Hence, the probability $p_i$ can be derived from \cite{AlbersL21} as follows:
	If~$i=1$, item $i$ is a dominating item in the terminology of \cite{AlbersL21} and Lemma~6 of~\cite{AlbersL21} gives $p_1 = c \cdot \ln(1/c) - o(1)$.
	In the case $i \geq 2$, item $i$ is a non-dominating item in the terminology of \cite{AlbersL21}. Here, Lemma 4 of \cite{AlbersL21} gives $p_i= p_i(i)$ and Lemma~5 of~\cite{AlbersL21} and gives $p_i(i) = p_1(i)$, that is, $p_i$ turns out to be the probability that the dominating item $1$ is accepted as the $i$-th item by the \textsc{single-ref} algorithm. Again, the claim follows from Lemma~6 of~\cite{AlbersL21}.
\end{proof}

Furthermore, observe that, since increasing the profit of an item cannot decrease its probability of being selected, we have $p_i \geq p_{i+1}$ for all $i \in [n-1]$. Note that $\mathsf{ALG}$ accepts no item if and only if the best item is in the sampling phase. Therefore, we have the following observation.

\begin{observation}\label{obs:sumOfPi}
	It holds that
\begin{align*}
\sum_{i=1}^n p_i &= 1 - \Pr[\text{\upshape $\mathsf{ALG}$ accepts no item}] \\ &= 1 - \Pr[\text{\upshape item $1$ appears in sampling phase}] = 1 - c \,.
\end{align*}
\end{observation}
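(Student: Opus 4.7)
The plan is to prove the chain of equalities via two independent steps. For the first equality, I would let $E_i$ denote the event that $\mathsf{ALG}$ packs item $i$ as the very first element. The events $E_1,\dots,E_n$ are pairwise disjoint (any single execution has at most one ``first packed'' element), and their union is precisely the event that $\mathsf{ALG}$ accepts at least one item. Since $p_i = \Pr[E_i]$, summing then yields $\sum_{i=1}^n p_i = \Pr[\mathsf{ALG}\text{ accepts some item}] = 1 - \Pr[\mathsf{ALG}\text{ accepts no item}]$.

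For the second equality, I would show that $\mathsf{ALG}$ accepts no item if and only if item $1$ lies in the sampling phase. The forward implication is immediate: if item $1$ arrives among the first $\lfloor cn \rfloor$ rounds, then the reference value satisfies $v^{*}=v_1$, so no later item can strictly exceed $v^{*}$ and the acceptance test never fires. Conversely, suppose item $1$ arrives in some round $\ell > cn$. If some item has already been packed by round $\ell$ we are done; otherwise the knapsack still has full residual capacity $B$, which accommodates any item in a $1$-$B$-instance, and since $v_1 > v^{*}$, item $1$ passes the acceptance test and is packed. Either way, $\mathsf{ALG}$ accepts at least one item.

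Finally, since the arrival order is uniformly random, $\Pr[\text{item }1\text{ is in the sampling phase}] = \lfloor cn \rfloor / n$, which equals $c$ in the asymptotic sense used throughout the paper (consistent with the $\pm o(1)$ convention of Lemma~\ref{wkeitenc}). The only mildly delicate point is the observation used in the converse above: in the $1$-$B$-knapsack setting an empty knapsack can always fit any single item, so item $1$ cannot be ``blocked'' by earlier choices when it arrives after the sampling phase. No further case analysis is needed, and the claim follows.
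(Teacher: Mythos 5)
Your proposal is correct and follows the same route the paper takes: the paper simply asserts (in the sentence preceding the observation) that $\mathsf{ALG}$ accepts no item if and only if item $1$ is in the sampling phase, and your proof spells out exactly this equivalence, plus the disjoint-union decomposition of $\sum_i p_i$ and the $\lfloor cn\rfloor/n \to c$ step that the paper leaves implicit. You were also right to flag the subtle point that an empty knapsack always accommodates a single item in a $1$-$B$ instance (so item $1$ cannot be blocked when it arrives after sampling with nothing yet packed); the paper glosses over this, but it is exactly what makes the reverse implication go through.
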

In the following subsection, we identify relations between the probabilities $P_i$ and $p_j$.

\subsection{Structural Lemma}
\label{sec:extendedSecProperties}

In this subsection, we show the following lemma connecting the probabilities $P_i$ to the probabilities $p_j$ from Lemma~\ref{wkeitenc}. The analysis showing the $1/e$-competitiveness of our algorithm 
is crucially based on this result. Note that we only use it for $B=2$ but it holds for all $B$.
\begin{lemma}\label{lemmaWkeit}
	The probability that $\mathsf{ALG}$ packs element $i \in \mathcal{I}$ is
	\begin{numcases}
	{P_i=}
	p_i & \text{if element $i$ is large,} \label{eq:l1large} \\
	i_s^* \cdot p_i + \sum\limits_{x=\mathrm{r}_\mathrm{s}(i) +1}^{B^*} p_{\mathrm{r}'_\mathrm{g}(x)} & \text{if element $i$ is small}, \label{eq:l1small}
	\end{numcases}
	with $ i_s^* := \min\{\mathrm{r}_\mathrm{s}(i), B\}  $ and $ B^* := \min\left\{ B, |\mathcal{I}_S| \right\}$.
\end{lemma}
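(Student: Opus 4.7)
My plan is to compute $P_i=\sum_{j=1}^{B^*} p_i(j)$ by identifying each $p_i(j)$ individually. The large case~\eqref{eq:l1large} is immediate: since packing a size-$B$ item leaves zero residual capacity, it can only ever be packed as the $1$-st element, so $p_i(j)=0$ for $j\geq 2$ and $P_i=p_i$.

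For the small case, fix a small item $i$ with $\mathrm{r}_\mathrm{s}(i)=a$. I would first make a structural observation about $\ALG$'s execution: because a packed large item leaves $0$ capacity while a packed small item leaves $B-1<B$, the sequence of packed items is either a single large item (if the first eligible arrival is large) or a run of small items (if the first eligible is small), with any later large items that pass the threshold simply being skipped. Hence, for $1\leq j\leq B^*$, the event ``$\ALG$ packs $i$ as the $j$-th element'' coincides with ``the first eligible arrival after the sample is small, \emph{and} $i$ is the $j$-th small eligible in arrival order''.

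Next I would condition on $M$, the smallest global rank appearing in the sampling phase. Given $\{M=m\}$ with $m>i$, the eligibles are exactly $\{1,\ldots,m-1\}$ and their relative arrival order is a uniform random permutation; write $S_m$ for the small items within and $s_m:=|S_m|$. The key symmetry is that, for a uniform random permutation of $\{1,\ldots,m-1\}$, the (unordered) set of positions occupied by $S_m$ is independent of the internal arrival order induced on $S_m$. Consequently $\Pr[\text{first eligible}\in S_m]=s_m/(m-1)$, and conditional on this event $i$ sits at the $j$-th small-eligible slot with probability $1/s_m$, yielding
\[
\Pr[\ALG\text{ packs }i\text{ as the }j\text{-th element}\mid M=m]=\tfrac{1}{m-1}
\]
for every $j$ with $1\leq j\leq\min(B,s_m)$, and $0$ otherwise. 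Crucially, this conditional value is independent of $j$.

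Summing over $m$ subject to $m>i$ (so $i$ is eligible) and $s_m\geq j$ (so at least $j$ small eligibles exist)---the latter being equivalent to $m>\mathrm{r}'_\mathrm{g}(j)$---and observing that the very same calculation specialized to $j=1$ yields $p_k=\sum_{m>k}\Pr[M=m]/(m-1)$ for any rank $k$, I would conclude $p_i(j)=p_i$ when $\mathrm{r}'_\mathrm{g}(j)\leq i$ (i.e.\ $j\leq a$) and $p_i(j)=p_{\mathrm{r}'_\mathrm{g}(j)}$ when $a<j\leq B^*$. Summing $p_i(j)$ over $j=1,\ldots,B^*$ then produces~\eqref{eq:l1small}, using that $\min(a,B^*)=i_s^*=\min(a,B)$ since $a\leq|\mathcal{I}_S|$. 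The main obstacle I anticipate is isolating the correct factorization of the conditional probability: a direct expansion produces $\binom{s_m-1}{j-1}$-type factors depending on $s_m$ that do not simplify term by term, and only the decoupling of ``first eligible is small'' from ``$i$'s rank within $S_m$'' reveals that the conditional probability collapses uniformly to $1/(m-1)$.
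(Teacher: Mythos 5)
Your argument is correct, and it is a genuinely different route from the paper's. The paper proves the small-item case by a chain of combinatorial swapping bijections on arrival sequences (its Lemmas~\ref{HL1}--\ref{HL3}): it decomposes $P_i=\sum_x p_i(x)$ and shows, roughly, that the event ``$i$ is packed first and some small $j$ is packed $\ell$-th'' is in bijection with the event ``$j$ is packed first and $i$ is packed $\ell$-th,'' and similarly that the identity of the item packed at slot $\mathrm{r}_\mathrm{s}(m)$ can be exchanged without changing probabilities. You instead condition on $M$, the highest-valued sample element, which fixes the eligible set as $\{1,\dots,M-1\}$ and leaves a uniformly random relative order on it; the independence of ``which positions the small eligibles occupy'' from ``the internal order of the small eligibles'' then collapses the conditional probability to $1/(m-1)$ for every admissible slot $j$, from which the identities $p_i(j)=p_i$ for $j\leq\mathrm{r}_\mathrm{s}(i)$ and $p_i(j)=p_{\mathrm{r}'_\mathrm{g}(j)}$ for $\mathrm{r}_\mathrm{s}(i)<j\leq B^*$ follow by matching the range of the sum over $m$. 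Both proofs rely on the same underlying structural observation (after packing a small item, only small items can ever be packed, so the packed set is a prefix of the small eligibles in arrival order), but yours makes it explicit up front and replaces the sequence-level swapping arguments with a single conditioning-and-symmetry computation. That yields a more transparent and uniform derivation; the paper's approach stays closer to the per-sequence bijective bookkeeping of Albers--Ladewig and never needs to introduce the random variable $M$ or the explicit representation $p_k=\sum_{m>k}\Pr[M=m]/(m-1)$, so it trades conceptual simplicity for self-containment within the $p_i(j)$/$E^{i,j}_{x,y}$ framework already set up.
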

Observe that \eqref{eq:l1large} follows immediately: Any large element can only be packed when the knapsack is empty, i.e., as the first element. The proof of \eqref{eq:l1small} requires a bit more work.
\begin{definition}
	Let $ E_{x,y}^{i,j} $ be the event that the small elements $ i $ and $j$ are packed as the $x$-th and $y$-th items, respectively.
\end{definition}

Note that the event that any item $i \in \mathcal{I}_S$ is packed as $x$-th item, where $x \geq 2$, can be partitioned according to the item packed first. Therefore, for any $i \in \mathcal{I}_S$ and $x \geq 2$,
\begin{equation}\label{HL0}
p_i(x) = \sum\limits_{j \in \mathcal{I}_S} \normalfont \text{Pr} \left[E_{1,x}^{j,i}\right] \,.
\end{equation}
We have the following technical lemmata.

\begin{lemma}\label{HL1}
	Let $i \in \mathcal{I}_S$ be any small item and $i_s^* = \min\{\mathrm{r}_\mathrm{s}(i), B\}$. 
	For $ 2 \leq \ell \leq i_s^* $, it holds that
	$ \displaystyle\sum\limits_{j \in \mathcal{I}_S} \normalfont \text{Pr}\left[E_{1,\ell}^{i,j}\right] = p_i $.
\end{lemma}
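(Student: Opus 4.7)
The plan is to prove the lemma by reducing $\sum_{j \in \mathcal{I}_S} \Pr[E_{1,\ell}^{i,j}]$ to $\Pr[A_i]$, where $A_i$ denotes the event that $\ALG$ packs $i$ as its first item, and then observing that $\Pr[A_i]=p_i$. The key intermediate claim is that, conditional on $A_i$ and on $\ell\leq i_s^*$, the algorithm packs at least $\ell$ items with probability $1$.

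First I would observe that, after $\ALG$ has packed the size-$1$ item $i$, the remaining knapsack capacity is $B-1$, which is too small to accept any large item. Hence any further packed item is automatically small, so
\[
\sum_{j \in \mathcal{I}_S} \Pr\bigl[E_{1,\ell}^{i,j}\bigr] = \Pr\bigl[i \text{ is packed first and $\ALG$ packs an $\ell$-th item}\bigr].
\]
This already removes the summation and reduces the lemma to showing that conditional on $A_i$, an $\ell$-th packed item exists almost surely.

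Second, I would characterize $A_i$: the algorithm packs $i$ first iff $i$ arrives at some position $t_i>cn$ and no item at positions $1,\ldots,t_i-1$ exceeds $v_i$ (any such earlier item with value above the sampling threshold $v^*$ would be accepted instead, and $v^*\le v_i$ is forced by $v_i$ surpassing $v^*$). Equivalently, every item of value strictly larger than $v_i$ arrives after $i$ in phase $2$. In particular, the $\mathrm{r}_\mathrm{s}(i)-1$ small items of value greater than $v_i$ all arrive after $i$ in the online phase.

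Third, I would show that each of those higher-value small items is packed until the knapsack is full: when any such item arrives, it has value greater than $v_i>v^*$ (so it beats the threshold), it has size $1$, and after $i$ is packed the knapsack still has capacity $B-1$. Thus the algorithm packs the first $\min(\mathrm{r}_\mathrm{s}(i)-1,\,B-1)$ of these items together with $i$, giving at least $\min(\mathrm{r}_\mathrm{s}(i),B)=i_s^*$ packed items in total. Consequently, for every $2\leq \ell\leq i_s^*$, conditional on $A_i$ an $\ell$-th item is packed with probability $1$, which yields $\sum_{j\in\mathcal{I}_S}\Pr[E_{1,\ell}^{i,j}]=\Pr[A_i]=p_i$.

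The only subtle point I expect is the characterization of $A_i$ and being explicit that irrelevant items (large items with high value arriving after $i$, or small items of intermediate value) cannot spoil the argument: large items arriving after $i$ are simply rejected for size reasons, and intermediate-value small items can only ever help push the count of packed items upward, so the lower bound $i_s^*$ on the number of packed items still holds. Once this is stated carefully, the rest is a direct probability identity.
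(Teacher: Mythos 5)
Your proof is correct and follows essentially the same strategy as the paper's: show that if $i$ is packed first then the $\mathrm{r}_\mathrm{s}(i)-1$ more-valuable small items must all arrive after $i$ in the online phase and will therefore be packed while capacity permits, so at least $i_s^*$ items are packed in total; since after a small item only small items fit, $\sum_{j\in\mathcal{I}_S}\Pr\left[E_{1,\ell}^{i,j}\right]$ equals $\Pr[A_i]=p_i$ for every $2\le\ell\le i_s^*$. One nit: your ``iff'' characterization of $A_i$ is really only a necessary condition---an item with value strictly between $v^*$ and $v_i$ arriving in the online phase before $i$ would be packed first even though it does not exceed $v_i$---but since you only invoke the forward implication, the argument is unaffected.
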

\begin{proof}
	The first step is to show that at least $\ell$ elements are accepted in total, if element $i$ is accepted first. Since element $ i $ has rank $\mathrm{r}_\mathrm{s}(i) $ among the small elements, there are $\mathrm{r}_\mathrm{s}(i) -1$ small elements that are more valuable.
	Their position in the input sequence cannot be in the sampling phase, nor before element $ i$ if it is packed first. So there are at least $i_s^*$ small elements that can be packed subsequently.
	Therefore, for $ 2 \leq \ell \leq i_s^* $, a small element is packed as $\ell$-th item.
	The claim follows by partitioning the event that $i$ is packed first according to the item $j \in \mathcal{I}_S$ packed as $\ell$-th item.
\end{proof}

\begin{lemma}\label{HL2}
	For any two small elements $i, j \in \mathcal{I}_S$ and any $x,y \in [B]$, we have
	$ \normalfont \text{Pr}\left[E_{x,y}^{i,j}\right] = \normalfont \text{Pr}\left[E_{x,y}^{j,i}\right]  $.
\end{lemma}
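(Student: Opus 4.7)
The plan is to prove the lemma via an explicit bijection on arrival orders. For each permutation $\pi$ of $\mathcal{I}$, let $\pi^{i\leftrightarrow j}$ denote the permutation obtained from $\pi$ by swapping the positions of items $i$ and $j$; this map is an involution on the uniform sample space of arrival orders, hence in particular a bijection. It therefore suffices to show that $E_{x,y}^{i,j}$ holds under $\pi$ if and only if $E_{x,y}^{j,i}$ holds under $\pi^{i\leftrightarrow j}$, since then the two events have equally many realizing permutations and the claimed probability identity follows. Trivial boundary cases such as $x=y$ (in which both sides are $0$) do not require separate treatment.

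To establish the ``if'' direction, I would fix a permutation $\pi$ on which $E_{x,y}^{i,j}$ occurs, with $i$ arriving at some position $t_i$ and $j$ at some position $t_j$, and assume without loss of generality $t_i<t_j$ so that $x<y$. Because $\mathsf{ALG}$ never packs items arriving in the sampling phase, one has $t_i,t_j>cn$, so swapping the two positions leaves the sampling phase (and hence the reference value $v^*$) untouched, and both $v_i, v_j$ exceed $v^*$ because both items were actually packed. I would then run the executions of $\mathsf{ALG}$ under $\pi$ and under $\pi^{i\leftrightarrow j}$ in parallel and maintain by induction on the round number the invariant that, after every round, the multisets of sizes of the packed items in the two executions coincide (so in particular the remaining capacities match). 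Rounds before $t_i$ see identical arrivals and so preserve the invariant trivially; at round $t_i$, item $j$ now arrives under $\pi^{i\leftrightarrow j}$, is above $v^*$, and fits by the invariant, so it is packed as the $x$-th item; the intermediate rounds again see identical arrivals and identical remaining capacities and so make identical packing decisions; finally at round $t_j$, item $i$ is packed as the $y$-th item for the same reasons. This witnesses $E_{x,y}^{j,i}$ under $\pi^{i\leftrightarrow j}$, and the converse direction follows by applying the involution once more.

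The main obstacle to watch out for is precisely the capacity-matching invariant in the middle rounds. If $i$ and $j$ had different sizes, then swapping them would change the used capacity throughout the interval $[t_i,t_j-1]$, potentially enabling or disabling a packing decision further down the sequence and thus breaking the coupling. The assumption $i,j\in\mathcal{I}_S$, so that both items have size~$1$, is exactly what makes the swap capacity-neutral, and this is the reason the symmetry claim is formulated for small items only.
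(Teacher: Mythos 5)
Your proof is correct and takes essentially the same approach as the paper: both use the swap of $i$ and $j$ as a bijection on arrival orders, observe that neither item lies in the sample so the reference value is unchanged, and conclude that every packing decision is preserved. Your explicit capacity-matching invariant just spells out what the paper leaves implicit, namely that the smallness of $i$ and $j$ is exactly what keeps the remaining capacity identical at every round under the two coupled executions.
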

\begin{proof}
	Consider any input sequence of $ E_{x,y}^{i,j} $ and the sequence resulting from swapping the elements $ i $ and $ j $. Since both elements are not part of the sample, the reference element is not changed by the swap. Therefore, no element that was previously accepted will be rejected and none that was previously rejected will be accepted. Only the order of selection changes.
\end{proof}
\begin{lemma}\label{HL3}
	For any small items $i,j,m \in \mathcal{I}_S$ with 
	$ \mathrm{r}_\mathrm{s}(m) > 1$
	and $ \mathrm{r}_\mathrm{s}(i) < \mathrm{r}_\mathrm{s}(m)$,
	it holds that
	$ \normalfont \text{Pr}\left[E_{1,\mathrm{r}_\mathrm{s}(m)}^{m,j}\right] = \normalfont \text{Pr}\left[E_{1,\mathrm{r}_\mathrm{s}(m)}^{i,j}\right]  $.
\end{lemma}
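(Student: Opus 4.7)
The approach is to extend the swap argument of Lemma~\ref{HL2} to a pair of items with \emph{different} values. Given any input ordering $\pi$ realizing $E_{1,\mathrm{r}_\mathrm{s}(m)}^{m,j}$, I would construct $\sigma$ by exchanging the positions of $i$ and $m$ in $\pi$ and show that $\sigma$ realizes $E_{1,\mathrm{r}_\mathrm{s}(m)}^{i,j}$, while $\pi\mapsto\sigma$ is an involution and hence a bijection between the two event sets.

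First I would pin down where $i$ sits in $\pi$. Let $t$ be the position of $m$ in~$\pi$, and let $v^*$ denote the maximum profit in the sampling phase of $\pi$. Since $m$ is packed, we have $t>cn$ and $v_m>v^*$. From $\mathrm{r}_\mathrm{s}(i)<\mathrm{r}_\mathrm{s}(m)$ it follows that $v_i>v_m>v^*$. This rules out $i$ being in the sampling phase (which would give $v^*\geq v_i$), and it rules out $i$ being at any position in $(cn,t)$ (the algorithm would then pack $i$ as the first item, contradicting that $m$ is packed first). So $i$ appears at some position $t'>t$ in $\pi$.

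Next I would compare the algorithm's runs on $\pi$ and on $\sigma$. Because neither $i$ nor $m$ is in the sampling phase, the reference value $v^*$ is unchanged in $\sigma$. I would then check position by position: (a) before position $t$ the two sequences agree and no item is packed in either run (otherwise $m$ would not be the first packed in $\pi$); (b) at position $t$, $\sigma$ presents $i$ with $v_i>v^*$ into an empty knapsack, so $i$ is packed using the same one unit of capacity that $m$ consumed in $\pi$; (c) strictly between $t$ and $t'$, the two sequences agree and the remaining capacity is identical, so the algorithm makes exactly the same decisions on both; (d) at position $t'$, $\sigma$ presents $m$ with $v_m>v^*$ into a knapsack with exactly the same remaining capacity that $i$ faced in $\pi$, so $m$ is packed iff $i$ was packed and uses the same one unit; (e) after $t'$ the sequences agree and the remaining capacities are identical. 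Consequently, the sequence of packings in $\sigma$ is obtained from the one in $\pi$ by replacing $m$ with $i$ and vice versa. In particular, if $j\notin\{i,m\}$ is packed as the $\mathrm{r}_\mathrm{s}(m)$-th item in $\pi$, then $j$ is still packed as the $\mathrm{r}_\mathrm{s}(m)$-th item in $\sigma$, so $\sigma\in E_{1,\mathrm{r}_\mathrm{s}(m)}^{i,j}$.

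Since $\pi\mapsto\sigma$ is an involution, it is a bijection between the orderings realizing $E_{1,\mathrm{r}_\mathrm{s}(m)}^{m,j}$ and those realizing $E_{1,\mathrm{r}_\mathrm{s}(m)}^{i,j}$; as all orderings have the same probability, the claimed equality follows. The only subtlety is steps (c) and (d): one must argue that the intermediate packing decisions are insensitive to \emph{which} of $i$ and $m$ occupies position $t$, which uses crucially that both are small (so they use the same capacity) and that both exceed $v^*$ (so each is packed exactly when the corresponding one in the other run is packed).
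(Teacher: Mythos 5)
Your forward direction is sound and essentially matches the paper's: you pin down that $i$ must occur after $m$ in any $\pi\in E_{1,\mathrm{r}_\mathrm{s}(m)}^{m,j}$ (because $v_i>v_m>v^*$), and then a careful step-by-step comparison shows the swap $\sigma$ of $i$ and $m$ lands in $E_{1,\mathrm{r}_\mathrm{s}(m)}^{i,j}$. This gives an injection and hence $\Pr\bigl[E_{1,\mathrm{r}_\mathrm{s}(m)}^{m,j}\bigr]\leq\Pr\bigl[E_{1,\mathrm{r}_\mathrm{s}(m)}^{i,j}\bigr]$.

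However, the step ``since $\pi\mapsto\sigma$ is an involution, it is a bijection between the two event sets'' is a genuine gap. An involution on the set of all orderings that maps the event $A:=E_{1,\mathrm{r}_\mathrm{s}(m)}^{m,j}$ into $B:=E_{1,\mathrm{r}_\mathrm{s}(m)}^{i,j}$ is a bijection from $A$ onto its image $\sigma(A)\subseteq B$, but you have not shown $\sigma(A)=B$. Surjectivity requires the converse inclusion $\sigma(B)\subseteq A$, and this does not follow from what you wrote: for a generic $\tau\in B$, it is not a priori clear that $m$ occurs after $i$ in $\tau$, and your ``$v_i>v_m>v^*$'' reasoning from the forward direction only constrains the relative position of the \emph{larger}-value item with respect to the one packed first, which is the wrong way around here. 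If $m$ were in the sample of $\tau$, or if it appeared after the sample but before $i$, then swapping would destroy the event (e.g.\ $i$ landing in the sample would raise the reference above $v_m$ and prevent $m$ from being packed first).

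This is exactly where the paper does extra work and your proposal does not: the paper shows that in \emph{any} $\tau\in E_{1,\mathrm{r}_\mathrm{s}(m)}^{i,j}$, element $m$ must lie strictly after $i$, using the fact that some item is packed in slot $z=\mathrm{r}_\mathrm{s}(m)$. Concretely, if $m$ were in the sample, the reference value would be at least $v_m$, so only the $z-1$ small items more valuable than $m$ could ever be packed, and no $z$-th item exists; if $m$ appeared after the sample but before $i$ without being packed, the reference must already exceed $v_m$, giving the same contradiction. Only with this observation in hand can one run the swap argument in the reverse direction to obtain $\Pr[A]\geq\Pr[B]$ and hence equality. Your write-up needs to add this counting argument (or an equivalent one establishing $\sigma(B)\subseteq A$) before the bijection claim is justified.
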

\begin{proof}
	Consider any input sequence from $E_{1,\mathrm{r}_\mathrm{s}(m)}^{m,j} $. Since $ \mathrm{r}_\mathrm{s}(i) < \mathrm{r}_\mathrm{s}(m) $ applies, element $ i $ lies behind the element with rank $ m $ in the sequence. If both are selected (see $ i_1 $ in Figure \ref{fig:fig3}), this also applies after they have been swapped (see Lemma~\ref{HL2} and $ m_1 $ in Figure \ref{fig:fig3}). If previously only element $ m $ of the two is packed, only element $ i $ (of the two) is selected after their swapping ($ i_2 $ and $ m_2 $ in Figure \ref{fig:fig3}), since in this case, nothing changes in the reference element either. Therefore $ \normalfont \text{Pr}\left[E_{1,\mathrm{r}_\mathrm{s}(m)}^{m,j}\right] \leq \normalfont \text{Pr}\left[E_{1,\mathrm{r}_\mathrm{s}(m)}^{i,j}\right]  $ applies.

	Now consider any input sequence from $ E_{1,\mathrm{r}_\mathrm{s}(m)}^{i,j} $. 
	We show that the element $ m $ lies behind the element $ i $ in the sequence since an element is packed as $z$-th item, where $z= \mathrm{r}_\mathrm{s}(m)$.
	Assuming this did not apply and $ m $ is in the sample, then there would be at most $ \mathrm{r}_\mathrm{s}(m)-1 $ small elements that can be packed.
	
	In the case that it occurs in the sequence after the sampling phase, but before element $ i $, there must be a more valuable element in the sample (because $ m $ was not packed) and therefore there are again at most $ \mathrm{r}_\mathrm{s}(m) -1 $ small elements that can be selected.
	In particular, in both cases, no element is packed as $z$-th item for $ \mathrm{r}_\mathrm{s}(m) $. This is a contradiction to the fact that we consider an input sequence in $ E_{1,\mathrm{r}_\mathrm{s}(m)}^{i,j} $.
	Now, using the same argumentation as in the first case, it follows that $ \normalfont \text{Pr}\left[E_{1,\mathrm{r}_\mathrm{s}(m)}^{m,j}\right] \geq \normalfont \text{Pr}\left[E_{1,\mathrm{r}_\mathrm{s}(m)}^{i,j}\right]  $, which completes the proof.
\end{proof}
\begin{figure}[t]
	\centering
	\includegraphics[width=0.8\textwidth]{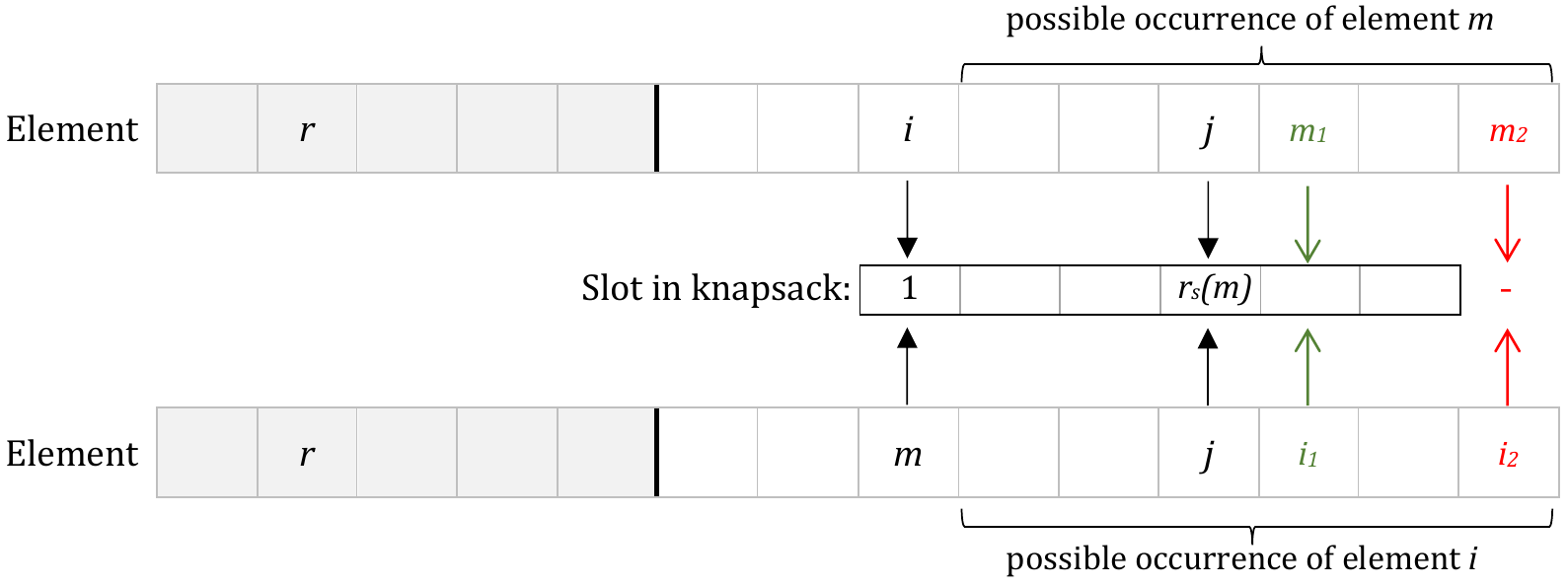} 
	\caption{Occurrence of element $ i $ and $ m $ in event $ E_{1,\mathrm{r}_\mathrm{s}(m)}^{m,j} $ and $ E_{1,\mathrm{r}_\mathrm{s}(m)}^{i,j} $}
	\label{fig:fig3}
\end{figure}

Using Lemmas~\ref{HL1} to~\ref{HL3}, we are now able to prove Lemma~\ref{lemmaWkeit}.
\begin{proof}[Proof of Lemma~\ref{lemmaWkeit}]
	Let $i \in \mathcal{I}$ be any item.
	If $i$ is large, it can only be packed as the first item, thus $ P_i = p_i $.
	Now, assume that $i$ is small. It holds that
	\begin{equation*}
	P_i 
	= \sum\limits_{x=1}^{B^*} p_i(x) 
	= \underbrace{\sum\limits_{x=1}^{i_s^*} p_i(x)}_{(*)} + \underbrace{ \sum\limits_{x=\mathrm{r}_\mathrm{s}(i) +1}^{B^*} p_i(x) }_{(**)}
	\,.
	\end{equation*}
	We next simplify both starred terms using Lemmas~\ref{HL1} to~\ref{HL3}.
	For $ (*) $, it holds that
	\begin{align*}
	\sum\limits_{x=1}^{i_s^*} p_i(x) &=  p_i(1) + \sum\limits_{x=2}^{i_s^*} \sum\limits_{j \in \mathcal{I}_S}  \normalfont \text{Pr}\left[E_{1,x}^{j,i}\right] && \text{(Equation~\eqref{HL0})} \\
	&= p_i + \sum\limits_{x=2}^{i_s^*} \sum\limits_{j \in \mathcal{I}_S}  \normalfont \text{Pr}\left[E_{1,x}^{i,j}\right] && (\text{Lemma \ref{HL2}}) \\
	&=  p_i + \sum\limits_{x=2}^{i_s^*} p_i(1)  && (\text{Lemma \ref{HL1}}) \\
	&= i_s^* \cdot p_i. &&
	\end{align*}
	For $ (**) $, we obtain
	\begin{align*}
	\sum\limits_{x=\mathrm{r}_\mathrm{s}(i) +1}^{B^*} p_i(x) &= \sum\limits_{x=\mathrm{r}_\mathrm{s}(i) +1}^{B^*} \sum\limits_{j \in \mathcal{I}_S}  \normalfont \text{Pr}\left[E_{1,x}^{j,i}\right] && \text{(Equation~\eqref{HL0})} \\
	&= \sum\limits_{x=\mathrm{r}_\mathrm{s}(i) +1}^{B^*} \sum\limits_{j \in \mathcal{I}_S}  \normalfont \text{Pr}\left[E_{1,x}^{i,j}\right] && (\text{Lemma \ref{HL2}}) \\
	&= \sum\limits_{x=\mathrm{r}_\mathrm{s}(i) +1}^{B^*} \sum\limits_{j \in \mathcal{I}_S}  \normalfont \text{Pr}\left[E_{1,x}^{\mathrm{r}'_\mathrm{g}(x),j}\right] && (\text{Lemma \ref{HL3}, }\mathrm{r}_\mathrm{s}(i) < x) \\
	&= \sum\limits_{x=\mathrm{r}_\mathrm{s}(i) +1}^{B^*} p_{\mathrm{r}'_\mathrm{g}(x),} && (\text{Lemma \ref{HL1}, }2 \leq x = \mathrm{r}_\mathrm{s}(\mathrm{r}'_\mathrm{g}(x)) \leq B)
	\end{align*}
	which completes the proof.
\end{proof}

The following corollary is an immediate consequence of Lemma~\ref{lemmaWkeit} for $B=2$.

\begin{corollary}\label{Wkeitencorollary}
	For $B=2$, the probability that $\mathsf{ALG}$ packs element $ i \in \mathcal{I} $ is
	\begin{equation*}
	P_i=
	\begin{cases}{}
	p_i & \text{ if $i$ is large,} \\
	p_i +p_{\mathrm{r}'_\mathrm{g}(2)} & \text{ if $i$ is small and $\mathrm{r}_\mathrm{s}(i)=1$,} \\
	2 p_i & \text{ if $i$ is small and $\mathrm{r}_\mathrm{s}(i)>1$} \,,
	\end{cases}
	\end{equation*}
	where, if the second most valuable small item does not exist, we set $p_{\mathrm{r}'_\mathrm{g}(2)}=0$.
\end{corollary}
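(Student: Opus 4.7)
The plan is to directly instantiate Lemma~\ref{lemmaWkeit} with $B=2$ and then split into the three cases stated in the corollary, according to whether $i$ is large and, if small, whether $\mathrm{r}_\mathrm{s}(i)=1$ or $\mathrm{r}_\mathrm{s}(i)\geq 2$. Because Lemma~\ref{lemmaWkeit} already encodes all the probabilistic content, what remains is just a routine evaluation of the two minima $i_s^* = \min\{\mathrm{r}_\mathrm{s}(i), 2\}$ and $B^* = \min\{2, |\mathcal{I}_S|\}$ in each case, together with a check of the empty-summation boundaries.

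For the large case I would simply quote equation~\eqref{eq:l1large}. For a small item $i$ with $\mathrm{r}_\mathrm{s}(i)=1$, the first term of~\eqref{eq:l1small} reduces to $p_i$ since $i_s^* = 1$, while the summation index runs over $x\in\{2,\ldots,B^*\}$. This range consists of the single value $x=2$ when $|\mathcal{I}_S|\geq 2$ and is empty when $|\mathcal{I}_S|=1$; both scenarios are captured uniformly by the single term $p_{\mathrm{r}'_\mathrm{g}(2)}$ together with the stated convention that this probability equals zero when no second small item exists. For small $i$ with $\mathrm{r}_\mathrm{s}(i)\geq 2$, we have $i_s^* = 2$, producing the leading contribution $2p_i$, while the summation range $\{\mathrm{r}_\mathrm{s}(i)+1,\ldots,B^*\}$ is empty since $\mathrm{r}_\mathrm{s}(i)+1\geq 3 > 2\geq B^*$; hence $P_i = 2p_i$.

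There is essentially no obstacle beyond careful bookkeeping. The only subtlety worth spelling out is the boundary case $|\mathcal{I}_S|=1$ in the $\mathrm{r}_\mathrm{s}(i)=1$ branch, which one verifies by observing that the summation in Lemma~\ref{lemmaWkeit} is genuinely empty there, so the expression $p_i + p_{\mathrm{r}'_\mathrm{g}(2)}$ correctly collapses to $p_i$ under the stated convention.
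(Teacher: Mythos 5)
Your proposal is correct and follows the same (indeed, the only natural) route: the paper presents this corollary as an immediate specialization of Lemma~\ref{lemmaWkeit} to $B=2$ without spelling out the case analysis, and your evaluation of $i_s^*$ and $B^*$ in each branch, together with the boundary check for $|\mathcal{I}_S|=1$, fills in exactly that bookkeeping.
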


\subsection{First approach: Without Boosting}
\label{sec:12KSwithoutBoosting}
In this subsection, we study Algorithm~\ref{alg:extendedSec} (as is) for 1-2-knapsack. Unfortunately, there are two instances such that it is impossible to choose the parameter $c$ so that Algorithm~\ref{alg:extendedSec} is $(1/e)$-competitive on both instances.

\begin{lemma}
	\label{lemma:notOptimalWithoutBoosting}
	For 1-2-knapsack, the competitive ratio of $\mathsf{ALG}$ is at most $0.35767$,
	assuming $n \to \infty$.
\end{lemma}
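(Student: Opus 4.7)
The plan is to exhibit two instance families, $\mathcal{I}_1$ and $\mathcal{I}_2$, such that no choice of the sampling parameter $c\in(0,1)$ allows $\mathsf{ALG}$ to simultaneously beat $0.35767$ on both. The families are exactly the ones outlined informally in the introduction: $\mathcal{I}_1$ consists of one very valuable large item together with a large number of small items of value $o(1)$, so that $\OPT$ is attained by the single large item; $\mathcal{I}_2$ consists of two small items of value $1$ together with $n-2$ large items whose values lie in $(1,1+\epsilon]$ for arbitrarily small $\epsilon>0$, so that $\OPT=2$ is attained by the two small items.

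For $\mathcal{I}_1$, I would argue that the only relevant event is $\mathsf{ALG}$ accepting the unique valuable large item, which by Corollary~\ref{Wkeitencorollary} and Lemma~\ref{wkeitenc} happens with probability $p_1=c\ln(1/c)\pm o(1)$; the remaining items only contribute $o(\OPT)$ to $\E[v(\mathsf{ALG})]$. Hence $\E[v(\mathsf{ALG})]/\OPT \le c\ln(1/c)+o(1)$. For $\mathcal{I}_2$, the key observation is that conditional on at least one large item being in the sample (an event of probability $1-o(1)$ as $n\to\infty$), the reference value $v^\ast$ strictly exceeds the value of every small item, so $\mathsf{ALG}$ can only accept large items, and at most one of them, since any large item already fills the knapsack. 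Thus $\E[v(\mathsf{ALG})] \le (1+\epsilon)\cdot\Pr[\mathsf{ALG}\text{ accepts any item}]+o(1)$, and by Observation~\ref{obs:sumOfPi} the acceptance probability equals $1-c$. Dividing by $\OPT=2$ and sending $\epsilon\to 0$ yields $\E[v(\mathsf{ALG})]/\OPT \le (1-c)/2+o(1)$.

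Combining the two bounds, the competitive ratio of $\mathsf{ALG}$ is at most
\[
\min\!\left\{c\ln(1/c),\;\tfrac{1-c}{2}\right\}+o(1).
\]
A straightforward analysis shows that the first expression is increasing and the second decreasing on the relevant range, so the minimum is maximized at the unique $c^\ast$ solving $c\ln(1/c)=(1-c)/2$. A short numerical evaluation gives $c^\ast\approx 0.28466$ and common value $\approx 0.35767$, which yields the claimed bound.

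The main obstacle I anticipate is the analysis of $\mathcal{I}_2$, and specifically the need to rule out selections of a small item cleanly: one must verify that the $o(1)$-probability event that both small items occur before every large item in the sample really does contribute only $o(1)$ to the ratio, and that in all remaining orderings the first item beating $v^\ast$ is large and occupies the entire knapsack. Once this case analysis is carried out carefully, the optimization over $c$ is just elementary calculus.
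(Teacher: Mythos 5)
Your proposal is correct and takes essentially the same approach as the paper: bounding the competitive ratio by $p_1 = c\ln(1/c) + o(1)$ on an instance where the optimum is a single large item, by $(1-c)/2 + o(1)$ on an instance where the optimum is two low-ranked small items (via Observation~\ref{obs:sumOfPi}), and then solving $\max_c \min\{c\ln(1/c),(1-c)/2\}$. The only cosmetic difference is in the construction of $\mathcal{I}_1$ (you fill out the instance with tiny-valued small items, the paper with tiny-valued large items) and the exact form of the values in $\mathcal{I}_2$, neither of which changes the argument; also note that the event you worry about for $\mathcal{I}_2$ (no large item in the sample) actually has probability exactly zero for $n$ large since the sample has $\lfloor cn\rfloor > 2$ items and only two small items exist, so that case is vacuous.
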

\begin{proof}
	Let $ 1 > \epsilon > 0 $ be a constant. We define two instances $\mathcal{I}_1$ and $\mathcal{I}_2$.
	In the first instance $\mathcal{I}_1$, all items are large and only one item has substantial profit. Formally, let $v_1 = 1$, $v_i = \epsilon^i$ for $2 \leq i \leq n$, and $s_i = 2$ for all $1 \leq i \leq n$. Then, for instance $\mathcal{I}_1$,
	\begin{equation}
	\label{eq:extSecAlgUBI1}
	\lim_{\epsilon \to 0}~
	\mathbb{E}[v(\mathsf{ALG})] = P_1 \cdot v_1 = p_1 \cdot v(\mathsf{OPT}).
	\end{equation}
	
	In the second instance $\mathcal{I}_2$, most items are large and essentially of the same profit. However, the optimal packing contains two small items that appear at ranks $n-1$ and $n$. Formally, set $s_i = 2$ for $1 \leq i \leq n-2$, $s_{n-1} = s_{n} = 1$, and~$ v_i = 1 + \epsilon ^i $ for all $ i \in \{1,\ldots,n\} $.
	As item $n$ never beats any reference item, we have $P_n = 0$. Hence, the algorithm selects only items from $\{1,\ldots,n-1\}$ with positive probability, and always at most one item. For instance $\mathcal{I}_2$, we get
	\begin{align}
	&\quad\lim_{\epsilon \to 0}~
	\mathbb{E}[v(\ALG)] 
	= 	\lim_{\epsilon \to 0}~ \sum_{i=1}^{n} (P_i \cdot (1+\epsilon^i)) 
	= \sum_{i=1}^{n} p_i\nonumber\\
	\overset{\text{Obs.~(\ref{obs:sumOfPi})}}{=}&\quad 1 - c
	\leq \frac{1-c}{2} \cdot v(\mathsf{OPT}).\label{eq:extSecAlgUBI2}
	\end{align}
	
	Overall, by Equations~\eqref{eq:extSecAlgUBI1} and~\eqref{eq:extSecAlgUBI2}, the competitive ratio as $n\to\infty$ of $\mathsf{ALG}$ is bounded from above by
	\begin{equation*}\label{eq:c}
	\max\limits_{c \in (0,1) } \min \left\{ p_1,\dfrac{1-c}{2} \right\} = \max\limits_{c \in (0,1) } \min \left\{ c \cdot \ln \dfrac{1}{c},\dfrac{1-c}{2} \right\} \leq 0.35767 \,.
	\end{equation*}
	This completes the proof.
\end{proof}

As a small side result, we show that this bound is almost tight. The techniques are similar to those used for our main result and presented in the full version of the paper.

\begin{proposition}\label{prop:no-boost}
	For $1$-$2$-knapsack, the competitive ratio of $\mathsf{ALG}$ is $ 0.35317-o(1)$,
	setting $c=0.26888$ and assuming $n\rightarrow\infty$.
\end{proposition}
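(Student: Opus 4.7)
The plan is to case-split on the structure of $\mathsf{OPT}$ and lower-bound $\mathbb{E}[v(\mathsf{ALG})]/v(\mathsf{OPT})$ in each case using Corollary~\ref{Wkeitencorollary}, mirroring the case analysis underlying our main boosting result (specialized to $\alpha = 1$). Any instance falls into one of two regimes: either (A) $\mathsf{OPT}$ consists of a single large item, which must then be the globally most valuable item---otherwise some small item's value would exceed every large item's and $\mathsf{OPT}$ would be forced to contain two smalls---so $\mathsf{OPT} = \{1\}$; or (B) $\mathsf{OPT}$ consists of the two top smalls $s_1, s_2$ with global ranks $k < m$.

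In regime (A), $\mathbb{E}[v(\mathsf{ALG})] \geq P_1 v_1 = p_1 \cdot v(\mathsf{OPT})$, and by Lemma~\ref{wkeitenc}, $p_1 = c\ln(1/c) + o(1)$, which at $c = 0.26888$ is at least $0.35317 - o(1)$.

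In regime (B), items with rank below $k$ or strictly between $k$ and $m$ are all large (as $s_1, s_2$ are the top two smalls). Using Corollary~\ref{Wkeitencorollary} together with the monotonicity $v_i \geq v_{s_1}$ for $i \leq k$ and $v_i \geq v_{s_2}$ for $k < i \leq m$, we derive
\[
\mathbb{E}[v(\mathsf{ALG})] \geq v_{s_1}(S_k + p_m) + v_{s_2}(S_m - S_k + p_m),
\]
where $S_j := \sum_{i \leq j} p_i$. Parameterized by $t := v_{s_1}/v(\mathsf{OPT}) \in [1/2, 1)$, the ratio is affine in $t$, so the adversary's infimum is attained at an endpoint: the limit $t \to 1$ (i.e., $v_{s_2} \to 0$) gives $S_k + p_m \geq p_1$, while $t = 1/2$ (i.e., $v_{s_1} = v_{s_2}$) gives $(S_m + 2 p_m)/2$. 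The latter sequence is easily checked to be decreasing in $m$ with limit $(1-c)/2 = 0.36556$ as $m \to \infty$ by Observation~\ref{obs:sumOfPi}, so both endpoints exceed $p_1$ at $c = 0.26888$. The sub-case $k > 1$ (item~$1$ large and outside $\mathsf{OPT}$) merely introduces an extra term $p_1 v_1 \geq p_1 v_{s_1}$, yielding the same functional form but with the larger partial sum $S_k \geq p_1 + p_2$, which only strengthens the bound.

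The main obstacle will be the careful analysis of regime (B): determining which endpoint in $t$ is binding via the sign of $2 S_k - S_m$ (positive for small $m$ and negative for large $m$ when $k = 1$), and then verifying that neither endpoint value ever dips below $0.35317$ at $c = 0.26888$---specifically, that $(S_m + 2 p_m)/2$ stays above its limit $(1-c)/2$ and that $S_k + p_m$ approaches $p_1$ in the worst adversarial limit but never undershoots. Combining both regimes yields $\mathbb{E}[v(\mathsf{ALG})] \geq (0.35317 - o(1)) \cdot v(\mathsf{OPT})$, as claimed.
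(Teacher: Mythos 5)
Your overall structure matches the paper's: case-split on whether $\mathsf{OPT}$ is a single item or two small items, apply Corollary~\ref{Wkeitencorollary}, and in the two-item case reduce to comparing the two "endpoint" quantities $\lambda_x=S_k+p_m$ and $\tfrac12(\lambda_x+\lambda_y)=\tfrac12(S_m+2p_m)$. Your observation that the ratio is affine in $t:=v_{s_1}/v(\mathsf{OPT})$ and minimized at an endpoint is exactly equivalent to the paper's split into the sub-cases $\lambda_x<\lambda_y$ (use $\lambda_x\geq p_1$) and $\lambda_x\geq\lambda_y$ (use Chebyshev's sum inequality).

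The gap is in how you handle the sequence $\theta_m:=\tfrac12(S_m+2p_m)$. You assert it is "easily checked to be decreasing in $m$ with limit $(1-c)/2\approx0.36556$", so that it never dips below that limit. This is not established, and it is not obviously true: $\theta_m$ is decreasing iff $p_{m+1}<\tfrac23 p_m$, and from the numerics implied by the paper's table the ratios $p_{m+1}/p_m$ increase toward roughly $2/3$ (e.g.\ $p_7/p_6\approx0.646$), so one cannot conclude monotonicity for all $m$; indeed, $\theta_m\to(1-c)/2$ but may approach the limit from below once $p_{m+1}/p_m$ exceeds $2/3$. You correctly flag this as "the main obstacle," but you present no way to resolve it, and it is precisely the step where a rigorous argument is needed. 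The paper sidesteps the issue entirely: it computes $\theta_y$ numerically for $y\in\{2,\dots,7\}$ (all $\geq\theta_7$) and for $y\geq 8$ uses the crude but unconditional bound
\[
\theta_y=\tfrac12\textstyle\sum_{i=1}^y p_i+p_y\ \geq\ \tfrac12\textstyle\sum_{i=1}^7 p_i\ =\ \theta_7-p_7\ \approx\ 0.35317\,,
\]
which simply drops the $p_y$ term and the tail of the sum and requires no monotonicity. This also explains the specific constant $0.35317$ in the proposition, which your route (if monotonicity held) would replace by the weaker-looking bound $\min(p_1,(1-c)/2)=p_1$; the two happen to nearly coincide numerically, but the paper's tail bound is the one that is actually proved. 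To fix your proof, replace the monotonicity claim by this drop-the-tail bound for $m\geq 8$ and verify the finitely many cases $m\leq 7$ by direct computation, exactly as the paper does.
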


\subsection{Optimal algorithm through $\alpha$-Boosting}
\label{sec:12KSwithBoosting}

The proof of Lemma~\ref{lemma:notOptimalWithoutBoosting} reveals the bottleneck of Algorithm~\ref{alg:extendedSec}: If the optimal solution consists of two elements having a high rank, the probability of selecting those items is small. This problem can be resolved by the concept of \textit{$\alpha$-boosting}.

\begin{definition}[$\alpha$-boosting]
	Let $\alpha \geq 1$ be the \textit{boosting factor}.
	For any item $i \in \mathcal{I}$, we define its \textit{boosted profit} to be
	\[
	v'_i = \begin{cases}
	\alpha \cdot v_i & \text{if $i$ is small,} \\
	v_i & \text{otherwise.} \\
	\end{cases}
	\]
\end{definition}

In the following, we investigate Algorithm~\ref{alg:extendedSec} enhanced by the concept of $\alpha$-boosting, denoted by $\mathsf{ALG}_\alpha$. This algorithm works exactly as given in the description of Algorithm~\ref{alg:extendedSec}, but works with the boosted profit $v'_i$ instead of the actual profit $v_i$ for any item $i \in \mathcal{I}$.
Note that the unboosted algorithm analyzed in Proposition~\ref{prop:no-boost} is $\mathsf{ALG}_1$.
For the remainder of this subsection, we fix $c = 1/e$. In particular, this implies $p_1 = 1/e\pm o(1)$ and $p_2 = 1/e^2\pm o(1)$ according to Lemma~\ref{wkeitenc}.

So far, we did not specify the boosting factor $\alpha$. 
However, the following intuitive reasoning already shows that $\alpha$ should be bounded from above and below: If $\alpha$ is too large, we risk that $\ALGa$ packs small items with high probability, even when they are not part of the optimal packing. On the other hand, by the result of Proposition~\ref{prop:no-boost} we know that $\mathsf{ALG}_1$ cannot achieve an optimal competitive ratio.
The following theorem provides lower and upper bounds on $\alpha$ such that~$\ALGa$ is~$(1/e)$-competitive.

\begin{theorem}
	\label{theo:optimalAlg12KS}
	For $1$-$2$-knapsack, algorithm $\mathsf{ALG}_\alpha$ is $(1/e-o(1))$-competitive if and only if $ 1.400382 \lesssim \alpha \leq e/(e-1) $ and $c=1/e$, assuming $n\rightarrow\infty$.
\end{theorem}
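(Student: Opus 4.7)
The plan is to prove the two directions of the ``if and only if'' separately, using that $\E[v(\ALGa)]=\sum_i P_i v_i$ together with Corollary~\ref{Wkeitencorollary} and the closed form for $p_i$ in Lemma~\ref{wkeitenc} (with $c=1/e$, giving in particular $p_1=1/e$ and $p_2=1/e^2$).

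For the ``only if'' upper bound $\alpha\leq e/(e-1)$, I would use an $\mathcal{I}_1$-style instance with one large item of value $1$, a single small item of value slightly above $1/\alpha$, and negligibly-valued fillers. The boosted small item overtakes the large item, so by Corollary~\ref{Wkeitencorollary} (with $p_{\mathrm{r}'_\mathrm{g}(2)}=0$ since no second small item exists) the small item has probability $p_1=1/e$ of being packed and the large item has $p_2=1/e^2$. Taking limits, $\E[v(\ALGa)]\to 1/(e\alpha)+1/e^2$ versus $v(\opt)=1$, and $(1/e)$-competitiveness forces $\alpha\leq e/(e-1)$. For the ``only if'' lower bound, the binding instance has $\opt=\{s_1^\star,s_2^\star\}$ with $v(s_1^\star)$ just above $v(s_2^\star)$ plus a small number $k$ of large items whose values lie in the narrow window $(\alpha v(s_2^\star),\alpha v(s_1^\star))$, so that the boosted order is $s_1^\star$, then the $k$ sandwiched large items, then $s_2^\star$. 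Corollary~\ref{Wkeitencorollary} then yields $\E[v(\ALGa)]\to p_1+3p_{k+2}+\alpha\sum_{i=2}^{k+1}p_i$ versus $v(\opt)\to 2$, and optimizing over $k$ gives the threshold $\alpha\gtrsim 1.400382$.

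For the ``if'' direction, I fix $c=1/e$ and $\alpha$ in the admissible interval and prove $\E[v(\ALGa)]\geq(1/e-o(1))\cdot v(\opt)$ by splitting on the structure of $\opt$. In Case~A ($\opt=\{L\}$ a single large item), a key observation is that $\alpha<2$ combined with the optimality of $\{L\}$ (which requires $v(L)>v(s_1)+v(s_2)$) implies that at most one small item overtakes $L$ in the boosted order, since $v(s)>v(L)/\alpha>v(L)/2$ for two distinct small items would make the two-small packing beat $\{L\}$. The sub-cases ``no overtake'' and ``one overtake'' are handled directly via Corollary~\ref{Wkeitencorollary}, and the upper bound $\alpha\leq e/(e-1)$ is exactly what makes the ``one overtake'' sub-case tight. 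In Case~B ($\opt=\{s_1^\star,s_2^\star\}$ with $v(s_1^\star)\geq v(s_2^\star)$), I partition large items by whether $v(L)>\alpha v(s_1^\star)$, $v(L)\in(\alpha v(s_2^\star),\alpha v(s_1^\star))$ (``sandwiched''), or $v(L)<\alpha v(s_2^\star)$, and use Corollary~\ref{Wkeitencorollary} to bound from below the contributions of the two $\opt$-items together with the ``above'' and ``sandwiched'' large items; the lower bound on $\alpha$ is exactly what makes each sandwiched item contribute enough value to cover $v(s_1^\star)+v(s_2^\star)\leq 2v(s_1^\star)$.

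The main obstacle is the Case~B analysis: the tightest sub-configuration has a specific finite number of sandwiched large items, and the resulting inequality does not admit a clean closed-form solution. I would rely on the monotonicity $p_i\geq p_{i+1}$, the identity $\sum_i p_i=1-c$ from Observation~\ref{obs:sumOfPi}, and explicit evaluations of $p_1,\ldots,p_6$ from Lemma~\ref{wkeitenc} to close the argument, with the numerical threshold $\alpha\approx 1.400382$ emerging as the worst case over all sub-configurations.
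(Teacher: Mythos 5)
Your overall plan matches the paper's proof almost exactly: the same split into ``optimum is a single item'' versus ``optimum is two small items,'' the same pair of tight instances for the two bounds on $\alpha$, and, in Case~B, the same observation that all items strictly between the two optimal small items in the boosted order must be large, partitioned into ``above'' and ``sandwiched.'' Your Case~A argument (at most one small item can overtake $L$ when $\alpha<2$, because two overtakers would each have value $>v(L)/2$ and together beat $\{L\}$) is precisely what the paper does, and both ``only if'' constructions and the resulting thresholds $\alpha\leq e/(e-1)$ and $\alpha\gtrsim\theta_{1,5}\approx1.400382$ are correct.

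The genuine gap is in the Case~B ``if'' direction. After Corollary~\ref{Wkeitencorollary} one obtains a bound of the form $\E[v(\ALGa)]\geq \lambda_x v(s_1^\star)+\lambda_y v(s_2^\star)$, and this must be compared to $(1/e)(v(s_1^\star)+v(s_2^\star))$ for \emph{arbitrary} $v(s_1^\star)\geq v(s_2^\star)>0$. Your hint that the sandwiched items only need to ``cover $v(s_1^\star)+v(s_2^\star)\leq 2v(s_1^\star)$'' suggests replacing $\OPT$ by $2v(s_1^\star)$; this is too lossy when $v(s_2^\star)\ll v(s_1^\star)$, since then $\lambda_x\approx p_1+p_k<2/e$ and the bound fails. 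The paper closes this by a two-case balancing argument: if $\lambda_x<\lambda_y$, then $\lambda_x v_x+\lambda_y v_y>\lambda_x\cdot\OPT\geq p_1\cdot\OPT$; otherwise apply Chebyshev's sum inequality to get $\lambda_x v_x+\lambda_y v_y\geq\tfrac12(\lambda_x+\lambda_y)\cdot\OPT$. Only after this reduction to the scalar quantity $\lambda_x+\lambda_y$ does the problem become the per-$(j,k)$ numerical check you describe, for which the tools you list (monotonicity of $p_i$, $\sum_i p_i=1-c$, explicit $p_1,\dots,p_6$) are indeed the right ones. You should also make explicit that $j=1$ is the worst boosted rank of $s_1^\star$ (which the paper derives from $p_j$ being nonincreasing) and handle the base configuration with no sandwiched items separately, since your expression $\theta$ otherwise divides by an empty sum.
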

\begin{proof}
	For any item $x \in \mathcal{I}$, let $\rho(x)$ denote the global rank of $x$ after boosting.
	On a high level, we need to consider two cases.
	
	In the first case, the optimal packing contains a single item $x$. 
	If $\rho(x) = 1$, we immediately obtain $\E [v(\ALGa)] \geq p_1 v_x = (1/e) \cdot \OPT$.	
	Now, suppose $\rho(x) \geq 2$. Let $a$ and $b$ be the items such that $\rho(a)=1$ and $\rho(b)=2$, respectively.
	Hence, 
	\[v'_a > v'_b \geq v'_x \geq \OPT \,.\]
	We note that $a$ is small, as otherwise $v_a = v'_a > \OPT$.
	Moreover, for $\alpha < 2$, item $b$ is large: If $b$ was small, it would follow that $v'_b = \alpha \cdot v_b$ and therefore~$v_a + v_b = v'_a/\alpha + v'_b/\alpha > (2/\alpha) \cdot \OPT > \OPT$, contradicting the assumption that the optimal packing contains a single item.
	Therefore, $a$ is small and $b$ is large, implying
	$v_a = v'_a/\alpha > \OPT/\alpha$ and $v_b = v'_b \geq \OPT$. Hence,
	\begin{align}
	\E[v(\ALGa)] \geq &\; p_1 \cdot v_a + p_2 \cdot v_b\label{ineq:UB-alpha}\\
	= &\; \left(\frac{1}{e}\pm o(1)\right) \cdot \frac{\OPT}{\alpha} + \left(\frac{1}{e^2}\pm o(1)\right) \cdot \OPT\nonumber\\
	\geq &\; \left(\frac{1}{e}\pm o(1)\right) \cdot \OPT \,,\nonumber
	\end{align}
	where the latter inequality holds for $\alpha \leq e/(e-1)$. Note that, when $v'_a=1$, $v'_b=1-\varepsilon$, and $v'_z=O(\varepsilon)$ for all other items $y$, Inequality~\eqref{ineq:UB-alpha} becomes satisfied with equality as $\varepsilon\to0$. Therefore, $\ALGa$ is not $(1/e-o(1))$-competitive when $\alpha > e/(e-1)$.
	
	In the remainder of the proof, we consider the case where the optimal packing contains two small items $x$ and $y$, where we assume $v_x > v_y$ without loss of generality.
	We set $j := \rho(x)$ and $k := \rho(y)$, where $1 \leq j < k$. Now, let $a_1,\ldots,a_{j-1}$ and $b_{j+1},\ldots,b_{k-1}$ denote the items appearing before $x$ and between $x$ and $y$, respectively, in the ordered sequence of boosted profits:
	\[
	v'_{a_1} > \ldots > v'_{a_{j-1}} > v'_x > v'_{b_{j+1}} > \ldots > v'_{b_{k-1}} > v'_y \,.
	\]
	We observe that neither $a$ items nor $b$ items can be small: 
	Otherwise, the profit of such an item would be strictly larger than $v_y$, 
	and as any two small items fit together, this item should be in the optimal packing instead of $y$.
	Therefore, we have $v_{a_i} = v'_{a_i} > v'_x = \alpha \cdot v_x$ for all $i\in\{1,\dots,j-1\}$ and $v_{b_i} = v'_{b_i} > v'_y = \alpha \cdot v_y$ for all $i\in\{j+1,\dots,k-1\}$.
	
	Now, we can bound the expected profit of $\ALGa$ as follows:
	\begin{align}
	\E[v(\ALGa)]
	&\geq \left(\sum_{i=1}^{j-1} P_i \cdot \alpha \cdot v_x \right) + P_j \cdot v_x + \left(\sum_{i=j+1}^{k-1} P_i \cdot \alpha \cdot v_y\right) + P_k \cdot v_y \label{eq:boosting1}\\
	&= \left(\sum_{i=1}^{j-1} p_i \cdot \alpha \cdot v_x \right) + (p_j + p_k) \cdot v_x + \left(\sum_{i=j+1}^{k-1} p_i \cdot \alpha \cdot v_y\right) + 2p_k \cdot v_y \nonumber\\	
	&= \underbrace{\left(p_j + p_k + \alpha \cdot \sum_{i=1}^{j-1} p_i \right)}_{\lambda_x} \cdot\; v_x + \underbrace{\left(2p_k + \alpha \cdot \sum_{i=j+1}^{k-1} p_i \right)}_{\lambda_y} \cdot \;v_y\,,\nonumber
	\end{align}
	where we use Corollary~\ref{Wkeitencorollary} for the first equality.
	
	If $\lambda_x < \lambda_y$ we immediately get $\lambda_x v_x + \lambda_y v_y > \lambda_x (v_x + v_y) \geq p_1 (v_x + v_y) = (1/e) \cdot \OPT$. 
	Therefore, we assume $\lambda_x \geq \lambda_y$ in the following.
	By Chebyshev's sum inequality, it holds that 
	$\lambda_x v_x + \lambda_y v_y \geq (1/2) \cdot (\lambda_x + \lambda_y) \cdot (v_x+v_y)$.
	Therefore, the competitive ratio is 
	\begin{equation}
	\label{eq:crBoosting}
	\frac{\E[v(\ALGa)]}{\OPT} \geq \frac{\lambda_x + \lambda_y}{2} 
	= \frac{1}{2} \cdot \left( (1-\alpha)\cdot p_j + 3p_k + \alpha \cdot \sum_{i=1}^{k-1} p_i \right) \,.
	\end{equation}
	If $k=2$, it follows that $j=1$ and therefore Equation~\eqref{eq:crBoosting} resolves to
	\[
	\E[v(\ALGa)] \geq \frac{1}{2} \cdot ( p_1 + 3p_2 ) \cdot \OPT > \frac{1}{e} \cdot \OPT \,,
	\]
	which holds independently of $\alpha$.
	For $k\geq 3$, $\ALGa$ is $(1/e-o(1))$-competitive by Equation~\eqref{eq:crBoosting} if
	\[
	\alpha \geq \frac{2/e - p_j -3p_k}{\sum_{i=1}^{k-1} p_i - p_j} =: \theta_{j,k}\,.
	\]
	It remains to show $\theta_{j,k} \leq 1.400382$ for all $k \geq 3$ and $j$ with $1 \leq j < k$.
	For this purpose, we first show
	\begin{equation}
	\label{eq:alphaBoundKgeq3}
	\theta_{j,k} = \frac{2/e - p_j -3p_k}{\sum_{i=1}^{k-1} p_i - p_j}
	\leq \frac{2/e - p_1 -3p_k}{\sum_{i=1}^{k-1} p_i - p_1}
	= \frac{1/e -3p_k}{\sum_{i=2}^{k-1} p_i}\pm o(1)
	~~~~~~~\text{for any $k \geq 3$.}
	\end{equation}
	Since $p_j$ is decreasing in $j$, the inequality in Equation~\eqref{eq:alphaBoundKgeq3} follows immediately if we can show
	$2/e - 3p_k > \sum_{i=1}^{k-1} p_i$ for large-enough $n$. 
	This inequality is easily verified for $k=3$, as $2/e - 3p_3 > p_1 + p_2$, for large-enough $n$.
	For $k \geq 4$, note that $p_k < p_1 - 1/3$, again for large-enough $n$, which is equivalent to $2/e - 3p_k > 1 - p_1$. 
	Using Observation~\ref{obs:sumOfPi}, we obtain
	$\sum_{i=1}^{k-1} p_i < \sum_{i=1}^{n} p_i = 1-c = 1-p_1$.
	Combining both inequalities yields Equation~\eqref{eq:alphaBoundKgeq3}.
	
	By computing the last term in Equation~\eqref{eq:alphaBoundKgeq3} for $3 \leq k \leq 10$, we obtain the upper bounds on $\theta_{j,k}$ given in Table~\ref{tab:tabelle3x}, up to additive $o(1)$ terms.
	Note that the maximum value is 1.400382.
	For $k \geq 11$, we obtain from Equation~\eqref{eq:alphaBoundKgeq3} together with $p_i \geq 0$ for all $i \geq 11$ that
	\[
	\theta_{j,k} \leq \frac{1/e -3p_k}{\sum_{i=2}^{k-1} p_i}
	\leq \frac{1/e}{\sum_{i=2}^{11-1} p_i}	
	< 1.398875\pm o(1) \,.
	\]
	
	\begin{table}[t]
		\centering
		\caption{Upper bounds on $\theta_{j,k}$ for $ 3 \leq k \leq 10 $ according to Equation~\eqref{eq:alphaBoundKgeq3}.}
		\label{tab:tabelle3x}
		\begin{tabular}{c|cccccccc}
			\toprule
			$k$ & 3 & 4 & 5 & 6 & 7 & 8 & 9 & 10 \\ 
			\midrule
			$\frac{1/e -3p_k}{\sum_{i=2}^{k-1} p_i}$ & 1.3475 & 1.3962 & \textbf{1.400382} & 1.3988 & 1.3968 & 1.3952 & 1.3941 & 1.3934  \\
			\bottomrule
		\end{tabular}
	\end{table}
		
	For the lower bound of approximately $1.400382$ on $\alpha$, first note that for $j=1$ and $k=5$, it holds indeed that
	\begin{align*}
	\theta_{1,5} &= 
	\dfrac{2/e-p_1-3p_5}{\sum_{i=1}^{5-1} p_i-p_1} = \dfrac{1/e-3p_5}{p_2+p_3+p_4} \pm o(1)\\
	&= -\dfrac{51}{16} + \dfrac{9}{4e} + \dfrac{75 - 522 e + 486 e^2}{16 - 96 e + 288 e^2 - 64 e^3}\pm o(1)
	\approx 1.400382\pm o(1)\,.
	\end{align*}
Next, note that setting $v'_x$,$v'_{b_2}$,$v'_{b_3}$,$v'_{b_4}$, and $v'_y$ all equal to $1+O(\varepsilon)$ and $v'(z)=O(\varepsilon)$ for all other items $z$ makes Inequality~\eqref{eq:boosting1} as well as Inequaltiy~\eqref{eq:crBoosting} tight as $\varepsilon\rightarrow0$. Therefore, the above arguments imply that $\alpha\geq \theta_{1,5}$ if and only if $\ALGa$ is $(1/e-o(1))$-competitive. This completes the proof.
\end{proof}

\section{Ordinal Algorithms for 1-$B$-Knapsack}
\label{sec:ordinal}

In this section, we consider ordinal algorithms for 1-$B$-knapsack with $B$ large. Recall that ordinal algorithms have access to both item sizes and the relative order on item values (of previously arrived items) but not to the actual item values. We show the following theorem.

\begin{theorem}\label{thm:ordinal}
There is an ordinal $(1/(e+1)-o(1))$-competitive algorithm for the $1$-$B$-knapsack problem, and every ordinal algorithm has a competitive ratio of at most $1/(e+1)+o(1)$ for this problem.
\end{theorem}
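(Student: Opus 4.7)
The theorem has two directions, and I would prove them separately. For the upper bound, I would consider the algorithm that at the start flips a biased coin: with probability $\lambda = e/(e+1)$ it runs the classic secretary algorithm augmented by a feasibility check—observe the first $n/e$ items without accepting, then accept the first item that exceeds the sample maximum and still fits in the remaining capacity—and with probability $1-\lambda = 1/(e+1)$ it runs an ordinal $(1-o(1))$-competitive $k$-secretary algorithm (e.g., Kleinberg's) with $k=B$ restricted to the subsequence of small items, ignoring all large items. Both subroutines are ordinal, so the combined algorithm is ordinal; this is precisely the ``linear interpolation'' between classic secretary and $k$-secretary alluded to in the introduction.

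The analysis splits on the shape of the optimal packing, which in $1$-$B$-knapsack is either a single large item or a subset of at most $B$ small items. In the former case the large item $L$ must satisfy $v_L = v_{\max}$, since otherwise some small item of value $>v_L$ would form a strictly better feasible packing on its own; the classic subroutine then selects $L$ with probability $1/e - o(1)$, because the accepting event coincides with the standard one that $L$ lies in the post-sample phase and the running maximum just before $L$ lies in the sample, ensuring that no prior item is accepted and the knapsack is still empty when $L$ arrives. This yields expected value at least $\lambda \cdot (1/e) \cdot \OPT = \OPT/(e+1) - o(1)$. In the latter case, the $k$-secretary subroutine collects at least $(1-o(1))$ times the value of the top-$B$ small items, which equals $\OPT$, so the algorithm extracts at least $(1-\lambda)(1-o(1))\OPT = \OPT/(e+1) - o(1)$. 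Either way, the claim follows.

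For the impossibility direction I would generalize the factor-revealing LP of Buchbinder, Jain, and Singh. Any ordinal algorithm can be encoded by acceptance probabilities $x_{t,k,s}$ indexed by time $t$, the relative rank $k$ of the current item among all items seen so far, and its size $s \in \{1, B\}$, with additional conditioning on the history of sizes and past acceptances. I would then consider a family of hard instances with exactly $B$ small items and $n-B$ large items in which essentially either a single large item carries all the value or the small items do, forcing $\OPT$ to be either the best large item alone or the full set of $B$ small items. Expressing the algorithm's expected gain on each instance as a linear form in the $x_{t,k,s}$ yields an LP whose optimum upper-bounds the competitive ratio.

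The principal obstacle is that, unlike in the $(j,k)$-secretary setting, the size of the next arriving item is itself random under uniform arrival order, so an algorithm's state cannot be cleanly indexed by a sequence of rank observations alone. I would address this by introducing an auxiliary \emph{batched-arrival} model, in which the algorithm is told the size composition of each batch while values within a batch still arrive in uniformly random order, and then proving via coupling-based reductions in both directions that the best competitive ratios in the two models agree up to additive $o(1)$. In the batched model the LP acquires more structure, and I would finish by exhibiting an explicit feasible dual solution of value $1/(e+1) + o(1)$. Constructing this dual witness, and carefully setting up the reductions so that the size randomness costs at most $o(1)$, is the technical heart of the impossibility argument.
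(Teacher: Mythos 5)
Your proposal follows the paper's strategy on both directions: mixing the classic secretary algorithm (probability $e/(e+1)$) with a $(1-o(1))$-competitive multiple-choice secretary algorithm (probability $1/(e+1)$), splitting the analysis on whether $\opt$ is a single large item or a set of small items, and proving the impossibility via a factor-revealing LP in an auxiliary batched-arrival model with an explicit dual witness of value $1/(e+1)+o(1)$.

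One concrete issue with your $\alg_S$: you run Kleinberg's algorithm ``restricted to the subsequence of small items, ignoring all large items,'' but that subroutine needs the length of the input sequence as a parameter, and the number of small items is not known online. The paper instead replaces each arriving large item with a value-zero dummy small item (which may be accepted and consumes capacity but contributes nothing), so the subroutine runs on a known length-$n$ instance whose optimum still equals $v(\opt_S)$; the resulting packing then translates back to the original instance without loss. Without this fix, or an equivalent one, the lower-bound argument as stated has a gap.

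On the impossibility you correctly identify the skeleton but leave the technical core unspecified, which is fine for a plan. One substantive difference in the formulation is worth flagging: you propose variables $x_{t,k,s}$ conditioned on time, rank, size, and history, whereas the paper's LP is far leaner. Because the hard pair of instances has $B$ large and $B$ small items and the adversary only chooses whether a single large item is dominant, one can argue that a good algorithm either selects the best-so-far large item in some batch or commits to taking all remaining small items from some batch onward, so two variables $p_i,q_i$ per batch suffice. That reduction in dimensionality is what makes writing down the explicit dual (a scaled version of the Buchbinder--Jain--Singh dual together with new $y_i$ variables) tractable, and you would need something like it to actually finish.
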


We first discuss the lower bound, i.e., the algorithm. Note that, while the input is any combination of large and small items, the optimal solution still consists of either the single most valuable item $\opt_L$ or of a set of up to $B$ small items $\opt_S$. Our algorithm can be viewed as a linear combination of (near-)optimal algorithms $\alg_L$ and $\alg_S$ against the respective cases. In particular, $\alg_L$ is the $(1/e)$-competitive algorithm~\cite{Ferg89a} for the standard secretary problem and run with probability $e/(e+1)$; $\alg_S$ is the $(1-o(1))$-competitive algorithm for $k$-secretary by Kleinberg~\cite{Kleinberg05} and run with probability $1/(e+1)$. The competitive ratio follows by a simple case distinction. A small subtlety that we need to take care of is that these subroutines require the number of items as input. To deal with this problem, we introduce dummy items. In the following, we make this idea formal.

\begin{proof}[Proof (Algorithm)]
The algorithm $\alg_L$ treats all items as if they were large and then applies the standard secretary algorithm~\cite{Lindley1961,Dynkin1963}. For the algorithm $\alg_S$, whenever a large item arrives, we pretend that a small dummy item with value $0$ arrives. These dummy items can be accepted and take up space in the capacity constraint, but they do not contribute to the solution value. On this adapted instance, we apply an optimal algorithm for the multiple-choice secretary problem, e.g. Kleinberg~\cite{Kleinberg05} or Kesselheim et al.~\cite{KesselheimRTV18}. Clearly, for both algorithms, any solution for the respective adapted instance can be translated back to a solution with equal value for the original instance. Also, both of these algorithms are ordinal.

For every input instance, our algorithm chooses $\alg_L$ with probability $\frac{e}{e+1}$ and $\alg_S$ otherwise. To analyze the competitive ratio, distinguish two cases. If $\opt=\opt_L$, we use that the algorithm chooses $\alg_L$ with probability $e/(e+1)$ and conditioned on that achieves an expected value of $v(\opt_L)/e$~\cite{Lindley1961,Dynkin1963}, yielding an unconditional expected value of $v(\opt_L)/(e+1)$. Otherwise, i.e., if $\opt=\opt_S$, we use that $\alg_S$ is run with probability $1/(e+1)$ which achieves, as $B\rightarrow\infty$, an expected value of $(1-o(1))\cdot v(\opt_S)$, resulting in an unconditional expected value of $(1-o(1))\cdot v(\opt_S)/(e+1)$.
\end{proof}

We now discuss the upper bound, i.e., the impossibility. In our construction, there are $B$ large and $B$ small items. All items have different values, and each large item is more valuable than each small item. The adversary chooses between two ways of setting the values: The first option is to make the solution consisting of \emph{all} small items much more valuable than any single large item; the second option is to make a single large item much more valuable than any other solution.

Ideally, we would like to analyze algorithms in the following setting: In each of $n$ rounds, the algorithm is presented with both a uniformly random small and a uniformly random large item out of the items not presented thus far. Upon presentation of any such two items, the algorithm has to choose whether to select all small items from now on or to select the current large item. While the actual setting, in which all items arrive in uniformly random order, is clearly different, we show below that working with the other setting is only with a $(1\pm o(1))$-factor loss in the impossibility by reductions between our problem and an auxiliary batched-arrival model.

Assuming the latter setting, we can write a linear program similar to that of Buchbinder et al.~\cite{BuchbinderJS14}. Like in that approach, each LP solution corresponds to an algorithm and vice versa. More specifically, our LP uses two variables (rather than one) for every time step, corresponding to the probabilities that the algorithm accepts a large item or the first small item, respectively. In addition, there is a variable representing the competitive ratio, and there are two upper bounds (rather than one) on that variable, representing the two instances the adversary can choose. A feasible dual solution then yields the desired impossibility. We formalize these ideas in the following.

\begin{proof}[Proof (Impossibility)]
Consider the following two instances that are treated identically by ordinal algorithms. There are $n=2B$ items where items $i\in \{1, \dots, B\}$ are large and items $i \in \{B+1, \dots, 2B\}$ are small. In one instance, the item values are $v_i= 1+(B-i)\cdot \epsilon$ for $i\leq B$ and $v_i= 1-i\epsilon$ for $i > B$. In the other instance, the values are the same except for $v_1 = B^2$. So, for both instances, the rank of item $i$ is indeed $i$, for all $i\in\{1,\dots,2B\}$. The two optimal solutions are $\opt_L = \{1\}$ and $\opt_S = \{B+1,B+2,\dots,2B\}$. The adversary decides which of the two instances is the actual instance.

We consider the following batched-arrival setting parameterized with some constant $k$ and assume that $k$ divides $n$. The items still arrive in uniformly random order, but the algorithm does not always have to make a decision upon the arrival of an item. More specifically, for any $i\in\{1,\dots,k\}$, upon the arrival of the $(i\cdot n/ k)$-th item, the algorithm may make a decision about all items that have arrived in the current batch, i.e., after the $((i-1)\cdot n/k)$-th item. Clearly, any upper bound on the competitive ratio achievable in this setting, is also an upper bound on the competitive ratio achievable in the original setting.

Note that the expected number of items of each type, i.e., small and large, in each batch is $n/(2k)$. Let $\delta>0$ be some constant. As follows from a standard concentration (e.g., Chernoff) bound, when $n\rightarrow\infty$, the probability that the number of items from each type is between $(1-\delta)\cdot n/(2k)$ and $(1+\delta)\cdot n/(2k)$ approaches $1$. From the union bound over all batches it then follows that also the probability that the number of items of each type \emph{in each batch} is within the given range approaches $1$. We may therefore assume that this is indeed the case at an arbitrarily small loss in our impossibility.

To analyze the algorithm in the batched-arrival setting, we write a linear program similar to that of Buchbinder et al.~\cite{BuchbinderJS14}. The LP encodes a probability distribution for the decisions that an algorithm $\alg$ makes against the pair of instances. The variable $p_i$ represents the probability that the algorithm selects the best large item from the $i$-th batch. Similarly, the variable $q_i$ represents the probability that the algorithm selects all small items from both the $i$-th batch and forthcoming batches. 

Note that the algorithm may make any such decision, i.e., selecting the best largest item or starting to select small items from a batch, for at most a single batch. Hence, we obtain $q_i\leq 1-\sum_{j=1}^{i-1}p_j+q_j$ as a constraint for our LP for all~$i\in[k]$. Further, observe that we may assume that the algorithm only selects a large item when the best largest item so far is in the current batch. In batch $i$, the probability for this to happen is at most $(1+\delta)/((1+\delta)+(i-1)\cdot(1-\delta))$. As $\delta\to0$, we obtain \[p_i\leq \left(1-\sum_{j=1}^{i-1}p_j+q_j\right)\cdot\frac1i\] for all $i\in[k]$, another constraint of the LP.

\begin{figure}[t]
\scriptsize
\begin{align*}
\max &&\hspace{-.7cm}c &&&&&\min &\hspace{-.8cm}\sum_{i=1}^k (x_i + y_i)&&&&\\
s.t.\ &&\hspace{-.7cm} c &\leq \frac{1}{k}\sum_{i=1}^k (i\cdot p_i) && && s.t.\ &\hspace{-.8cm} \alpha + \beta &= 1 &&\\
&&\hspace{-.7cm} c &\leq \sum_{i=1}^k \left[\left(1-\frac{i-1}{k}\right)\cdot q_i\right] && &&& \hspace{-.8cm}i\cdot x_i + \sum_{j=i+1}^{k} (x_j + y_j) &\geq  \frac{i}{k}\cdot \alpha &&\forall\; i\in[k]\\
&&\hspace{-.7cm} i\cdot p_i&\leq 1 - \sum_{j=1}^{i-1} (p_j + q_j) && \forall\; i\in[k]&&&\hspace{-.8cm} y_i + \sum_{j=i+1}^k (x_j + y_j) &\geq \left(1-\frac{i-1}{k}\right)\cdot\beta && \forall\; i\in[k]\\
&&\hspace{-.7cm} q_i & \leq 1 - \sum_{j=1}^{i-1}(p_j + q_j)&& \forall\; i\in[k]&&&\hspace{-.8cm} x_i, y_i & \geq 0 &&\forall\; i\in[k]\\
&&\hspace{-.7cm}  p_i, q_i & \geq 0 &&\forall\; i\in[k]&&&\hspace{-.8cm} \alpha,\beta & \geq 0
\end{align*}
\caption{The primal and dual linear programs used in our proof of the upper bound in Theorem~\ref{thm:ordinal}.}
\label{fig:lps}
\end{figure}

The objective function of the LP is $c$, an upper bound on the competitive ratio of the algorithm. For each of the two instances that the adversary could choose, we write an additional constraint upper bounding $c$. If the adversary chooses the first instance and the algorithm starts selecting small items at the end of the $i$-th batch, the fraction of $v(\opt_S)$ the algorithm obtains is at most \[\frac{(k-i+1)\cdot(1+\delta)}{(k-i+1)\cdot(1+\delta)+(i-1)\cdot(1-\delta)}\xrightarrow{\delta\to0}1-\frac{i-1}{k}\,.\] Hence, as $\delta\to0$, we obtain the constraint $c \leq \sum_{i=1}^k \left(1-\frac{i-1}{k}\right)q_i$. Now consider the case that the adversary chooses the second instance. Suppose that the algorithm selects the best large item from the $i$-th batch, which is by assumption the best item that has already arrived. Since the order of large items is a uniformly random order, the probability that the chosen item is the globally best large item is the fraction of already observed large items within the whole instance, that is, at most \[\frac{i\cdot(1+\delta)}{i\cdot(1+\delta)+(k-i)\cdot(1-\delta)}\xrightarrow{\delta\to0}\frac ik\,.\] Hence, as $\delta\to0$, we get the constraint $c \leq \sum_{i=1}^k (p_i\cdot\frac{i}{k}) = \frac{1}{k}\sum_{i=1}^k i \cdot p_i$. We give both the resulting LP and its dual in Figure~\ref{fig:lps}.

We give a solution to the dual LP. Let $\tau$ be the integer number such that \[\sum_{i=\tau}^{k-1} \frac{1}{i} < 1 \leq \sum_{i=\tau-1}^{k-1}\frac{1}{i}\,.\] We set $y_i = 0$ for all $i<k$, $y_k=1/((e+1)\cdot k)$, $x_i = 0$ for $i < \tau$, and \[x_i = \frac{e}{(e+1)\cdot k}\cdot \left(1 - \sum_{j=i}^{k-1}\frac{1}{j}\right)\] for $i \geq \tau$. Further, $\alpha = e/(e+1)$ and $\beta = 1/(e+1)$. Note that this choice of $x$ is analogous to the dual solution by Buchbinder et al.~\cite{BuchbinderJS14} but scaled by a factor of~$\alpha$. 

We argue that the solution is feasible when $x$ is scaled up by a $(1+o(1))$ factor (where the Landau symbol is with respect to $k\to\infty$). Clearly, $\alpha+\beta=1$. The inequality $ix_i + \sum_{j=i+1}^{k} x_j + y_j \geq  \frac{i}{k}\cdot \alpha$ is the same as in the dual by Buchbinder et al., except for additional $y$ variables on the left-hand side and a scaling by $\alpha$ of the right-hand side. Therefore with our choice of $x$ (which is scaled up by $\alpha$ compared to Buchbinder et al.), the inequalities are identical and the previous proof of feasibility also holds, even without additional scaling of $x$. We consider the remaining (new) inequalities. For $i=1$, we have
\begin{align*}y_1 + \sum_{j=2}^k x_j + y_j &\geq \frac{e}{(e+1)\cdot k}\cdot \sum_{j=\tau}^k\left(1 - \sum_{\ell=j}^{k-1}\frac{1}{\ell}\right)\\
&= \frac{e}{(e+1)\cdot k} \cdot \left((k-\tau+1) - \sum_{j=\tau}^k\sum_{\ell=j}^{k-1}\frac{1}{\ell}\right)\\
&= \frac{e}{(e+1)\cdot k} \cdot \left((k-\tau+1) - \sum_{j=\tau+1}^{k}\frac{j-\tau}{j}\right)\\
&= \frac{e}{(e+1)\cdot k} \cdot \left(1 + \tau\sum_{j=\tau+1}^{k}\frac{1}{j}\right) \geq \frac{1}{1 + o(1)}\cdot \frac{1}{e+1}\,.
\end{align*}
For $1<i<\tau$ the corresponding inequality is weaker than the latter inequality. For $i \geq \tau$, we have
\begin{align*}y_i + \sum_{j=i+1}^k x_j + y_j = 	y_k+\sum_{j=i+1}^k x_j.
\end{align*}
Since $y_k=\beta/k$, we therefore have to show that, after scaling $x$ up by a $(1+o(1))$-factor, $\sum_{j=i+1}^k x_j$ is at least as large as $(1-i/k)\cdot\beta$. This is clear for $i=k$. For~$\tau\leq i\leq k-1$,
\begin{align*}
\sum_{j=i+1}^k x_j&= \frac{e}{(e+1)\cdot k}\cdot \sum_{j=i+1}^k\left(1 - \sum_{\ell=j}^{k-1}\frac{1}{\ell}\right)\\
&= \frac{e}{(e+1)\cdot k} \cdot \left((k-i) - \sum_{j=i+1}^k\sum_{\ell=j}^{k-1}\frac{1}{\ell}\right)\\
&= \frac{e}{(e+1)\cdot k} \cdot \left((k-i) - \sum_{j=i+1}^{k-1}\frac{j-i}{j}\right)\\
&= \frac{e}{(e+1)\cdot k} \cdot \left(1 + i\sum_{j=i+1}^{k-1}\frac{1}{j}\right) \geq \frac{1}{1 + o(1)}\cdot \left(1-\frac ik\right)\cdot\frac{1}{e+1}\,.
\end{align*}
Similar to the previous calculations, the objective-function value is
\begin{align*}
&(1+o(1))\cdot\frac{e}{(e+1)\cdot k}\cdot\sum_{j=\tau}^k\left(1-\sum_{\ell=j}^{k-1}\frac{1}{\ell}\right) \\
= \;&(1+o(1))\cdot\frac{e}{(e+1)\cdot k}\cdot \left(1 + \tau\sum_{j=\tau+1}^{k}\frac{1}{j}\right)\\
\leq \;& (1+o(1))\cdot\frac{e}{(e+1)\cdot k}\cdot \tau\\
\leq \;& (1+o(1))\cdot\frac{1}{e+1}\,,
\end{align*}
as claimed.
\end{proof}

\section{Conclusion}
\label{sec:conclusion}
In this paper, we have established that the $1$-$2$-knapsack secretary problem is no harder than the classic secretary problem in a competitive-ratio sense. While we previously noticed that our technique cannot directly be extended to the general setting, we believe that our work is a first non-trivial step within the larger research plan of settling the achievable competitive ratio for general knapsack secretary.

It seems plausible that our result extends to the setting of arbitrary knapsack size $B$ and item sizes $1$ or $2$. One approach may be combining our techniques with simple $1/e$-competitive algorithms for $k$-secretary~\cite{BabaioffIKK07}. More general variants seem to require handling packings of items of various sizes. A variant that avoids considering such potentially complicated configurations and may still yield an impossibility of larger than $1/e$ is $1$-$B$-knapsack.
 
\newpage
\bibliographystyle{abbrv}
\bibliography{ks}

\newpage
\appendix
\section{Proof of Propostion~\ref{prop:no-boost}}

\begin{proof}[Proof of Propostion~\ref{prop:no-boost}]
	In the first case, the optimum consists of a single element, thus, $ v(\mathsf{OPT}) = v_1 $. Since $\ALG$ chooses this element with probability $ P_1 \geq p_1 $, we have
	\begin{equation}\label{eq:firstCase1}
	\mathbb{E}[v(\mathsf{ALG})] \geq p_1 \cdot v_1 = p_1 \cdot v(\mathsf{OPT}) \geq (0.35317\pm o(1)) \cdot v(\mathsf{OPT}) \,,
	\end{equation}
	where for the last inequality we used Lemma~\ref{wkeitenc} with $c = 0.26888$. 
	
	Now, assume that the optimal packing contains two elements $x$ and $y$, where we assume $x<y$ w.l.o.g.. 
	Hence, $v(\mathsf{OPT}) = v_x + v_y$. Note that $x$ and $y$ must be the most profitable items among the set of small items, i.e., $ \mathrm{r}_\mathrm{s}(x)=1 \text{ and } \mathrm{r}_\mathrm{s}(y)=2 $.
	Next, we bound the expected profit of the packing. 
	Since $v_i \geq v_x$ for $1 \leq i \leq x$ and $v_i \geq v_y$ for $x+1 \leq i \leq y$,
	we obtain
	\[
	\mathbb{E}[v(\mathsf{ALG})] 
	\geq \sum\limits_{i=1}^y (P_i \cdot v_i)
	= v_x \cdot \sum\limits_{i=1}^{x} P_i  + v_y \cdot \sum\limits_{i=x+1}^{y} P_i \,.
	\]
	Define $\lambda_x := \sum_{i=1}^{x} P_i$ and $\lambda_y := \sum_{i=x+1}^{y} P_i$.
	If $\lambda_x < \lambda_y $, then $\lambda_y > \lambda_x \geq p_1$ 
	and thus $\mathbb{E}[v(\mathsf{ALG})] \geq v_x \cdot p_1 + v_y \cdot p_1 = p_1 \cdot v(\mathsf{OPT})$,
	which gives the same bound as in \eqref{eq:firstCase1}.
	Therefore, we assume $ \lambda_x \geq \lambda_y $ in the following.
	Since $v_x \geq v_y$, applying Chebyshev's sum inequality gives
	\begin{equation*}
	\mathbb{E}[v(\mathsf{ALG})] 
	\geq \left( \dfrac{v_x + v_y}{2}\right) \left( \lambda_x + \lambda_y \right) 
	=\frac{ v(\mathsf{OPT}) }{2}\cdot\sum\limits_{i=1}^{y} P_i 
	= \frac{v(\mathsf{OPT})}{2}\cdot \left(2p_y+\sum\limits_{i=1}^{y} p_i \right) , 
	\end{equation*}
	where the last step results from Corollary \ref{Wkeitencorollary} with $ \mathrm{r}_\mathrm{s}(x)=1 \text{ and } \mathrm{r}_\mathrm{s}(y)=2 $.
	Let $\theta_y = (1/2) \cdot \sum_{i=1}^{y} p_i +p_y$. 
	By calculating $p_i$ for $1 \leq i \leq 7$ and $c=0.26888$ using Lemma~\ref{wkeitenc}, we obtain the following upper bounds up to additive $o(1)$ terms:
	\begin{center}
		\begin{tabular}{c|cccccc}
			& $y=2$ & $y=3$ & $y=4$ & $y=5$ & $y=6$ & $y=7$ \\
			\hline
			$\theta_y$ & 0.4115 & 0.3820 & 0.3718 & 0.3678 & 0.3662 & \textbf{0.3656} \\
		\end{tabular}
	\end{center}
	Thus, for $ y \in \{2,\ldots,7\} $ we have $\theta_y \geq \theta_7$ for large-enough $n$.
	For $ y \in \{8,...,n\} $, we get $\theta_y = (1/2) \cdot \sum_{i=1}^{y} p_i +p_y
	\geq (1/2) \cdot \sum_{i=1}^{7} p_i = \theta_7 - p_7$.
	Hence, for any $y \geq 2$ it holds that $\theta_y \geq \theta_7 - p_7 \geq 0.35317\pm o(1)$.
	Overall, in the second case it holds that
	\begin{equation*}
	\begin{aligned}
	\mathbb{E}[v(\mathsf{ALG})] & \geq \frac{v(\mathsf{OPT})}{2}\cdot \left(2p_y+\sum\limits_{i=1}^{y} p_i \right) \geq (\theta_7 - p_7) v(\mathsf{OPT}) \geq (0.35317\pm o(1))\cdot v(\mathsf{OPT}) \,.
	\end{aligned}
	\end{equation*}
	This completes the proof.
\end{proof}



\end{document}